\newif\ifabstr
\newtheorem{theorem}{Theorem}
\newtheorem{proposition}[theorem]{Proposition}
\newtheorem{question}[theorem]{Question}
\newtheorem{lemma}[theorem]{Lemma}
\theoremstyle{remark}
\theoremstyle{definition}
\newtheorem{definition}[theorem]{Definition}
\newcommand{\R}{\mathbb{R}}
\newcommand{\N}{\mathbb{N}}
\newcommand{\C}{\mathcal{C}}
\newcommand{\I}{\mathcal{I}}
\newcommand{\T}{\mathcal{T}}
\newcommand{\K}{{\sf K}}
\newcommand{\heading}[1]{\medskip\par\noindent{\bf #1}}
\title{Recognition of Collapsible Complexes is NP-complete}
\author{Martin Tancer}
\institute{
  Institutionen f\"{o}r matematik, Kungliga Tekniska
  H\"{o}gskolan, 100~44 Stockholm. Supported by the G\"{o}ran Gustafsson
  postdoctoral fellowship.\\
  \bigskip
  Department of Applied
  Mathematics, Charles University in Prague, Malostransk\'e n\'am. 25, 118 00,
  Praha 1.\\ 
  \bigskip
  Email: \email{tancer@math.kth.se}
}
\title{Recognition of collapsible complexes is NP-complete}
\author{
Martin Tancer\thanks{Institutionen f\"{o}r matematik, Kungliga Tekniska
H\"{o}gskolan, 100~44 Stockholm and Department of Applied
Mathematics, Charles University in Prague, Malostransk\'e n\'am. 25, 118 00,
Praha 1. Supported by the G\"{o}ran Gustafsson
postdoctoral fellowship. Partially supported by the project CE-ITI (GA
\v{C}R P202/12/G061) of the Czech Science Foundation.
}}
\date{}
\begin{document}

\maketitle


\begin{abstract}
We prove that it is NP-complete to decide whether a given (3-dimensional)
simplicial complex is collapsible. This work extends a result of Malgouyres
and Franc\'{e}s showing that it is NP-complete to decide whether a given
simplicial complex collapses to a $1$-complex.
\end{abstract}


\section{Introduction}

A classical question often considered in algebraic topology is whether some
topological space is contractible. When we consider this question as an
algorithmic question, that is, we consider the topological space as an input for
an algorithm (say as a finite simplicial complex\footnote{Many topological
spaces cannot be represented as finite simplicial complexes. However, we only
consider those that can be represented this way. Already for such spaces, this
question is undecidable.}), then it turns out that
this question is algorithmically undecidable by a result of~Novikov;
see~\cite[\S 10]{volodin-kuznetsov-fomenko74} (see also Appendix).

An important, algorithmically recognizable, subclass of contractible complexes
is the class of collapsible complexes, introduced by Whitehead.
Roughly speaking, a simplicial complex is collapsible if it can be shrunk to a point by a sequence of face collapses, which preserve the homotopy type. (The precise definition is given in the following section.) 

We focus on the computational complexity of the collapsibility problem,
considered as an algorithmic question.
We show that this question
is NP-complete. More precisely, we obtain NP-completeness even if we restrict
the input to $3$-dimensional complexes.

\begin{theorem}
\label{t:main}
It is NP-complete to decide whether a given $3$-dimensional simplicial complex is
collapsible.
\end{theorem}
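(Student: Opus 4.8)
\medskip\noindent\textbf{Containment in NP.} The easy half is to observe that collapsibility lies in NP. An elementary collapse deletes a free face together with its unique coface, removing exactly two simplices, so any collapse of a complex $K$ to a single vertex consists of at most $|K|/2$ elementary collapses, where $|K|$ is the number of simplices of $K$. The list of deleted pairs is therefore a certificate of polynomial size: one checks in polynomial time that at each step the first simplex of the pair is free in the current complex and the second is its unique proper coface, and that a single vertex remains at the end. So everything comes down to NP-hardness.

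\medskip\noindent\textbf{The reduction.} I would reduce from $3$-SAT. From a $3$-CNF formula $\varphi$ on variables $x_1,\dots,x_n$ with clauses $C_1,\dots,C_m$ I would build a $3$-dimensional simplicial complex $K_\varphi$ of size polynomial in $n+m$, so that $K_\varphi$ is collapsible precisely when $\varphi$ is satisfiable. It would be assembled along a common, obviously collapsible ``spine'' from three kinds of pieces: a \emph{variable gadget} $V_i$ for each $x_i$, carrying two distinguished subcomplexes that encode ``$x_i$ true'' and ``$x_i$ false''; \emph{wiring gadgets} that carry the chosen value of $x_i$ to (and, where needed, duplicate it at) each of its literal occurrences; and a \emph{clause gadget} $W_j$ for each $C_j$, attached to the three wires feeding it. The intended behaviour: at the first moment any collapse can reach $V_i$ there are exactly two available free faces, and starting on one of them forces the complete collapse of the ``true'' half while sealing off the ``false'' half, and symmetrically, so a collapse records a truth assignment; and $W_j$ is collapsible exactly when one of its incoming wires has been collapsed in the ``true'' way, while if all three arrive ``false'' then (a neighbourhood of) $W_j$ becomes a contractible complex with no free face, e.g.\ a triangulated dunce hat or Bing's house, which stalls the collapse permanently. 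Since each piece is contractible, $K_\varphi$ is contractible regardless of $\varphi$, so the only possible obstruction to collapsing it is this combinatorial ``stuck'' phenomenon.

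\medskip\noindent\textbf{The two implications.} The forward implication is explicit: from a satisfying assignment I would write down a collapse of $K_\varphi$ --- make the prescribed choice in each $V_i$, propagate the values along the wires, collapse each $W_j$ through its satisfied literal, and clean up the now-exposed collapsible remainder (which contains the spine). The converse carries the real weight, through a \emph{rigidity} analysis: I would prove a sequence of lemmas asserting that every collapse of $K_\varphi$ can be reordered so that, while the spine survives, all free faces lie inside variable gadgets; that entering $V_i$ forces one of its two branches to run to completion before anything else in $V_i$ moves; and that a clause gadget all of whose wires are ``false'' never gains a free face. From any collapse one then reads off a truth assignment, which by the last lemma must satisfy $\varphi$.

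\medskip\noindent\textbf{Main obstacle.} The difficulty is concentrated in this rigidity analysis and in the gadget construction that supports it. One must exclude ``cheating'' collapses that interleave partial collapses of many gadgets, or that attack a clause gadget from an unintended face, or that use the freedom in the order of elementary collapses to evade the dunce-hat obstruction. This forces the gadgets to be at once (i) locally rigid in the sense above, (ii) gluable along a shared collapsible spine without creating spurious free faces at the seams, and (iii) realizable in a genuinely $3$-dimensional complex --- and it is the dimension bound that bites, since it forbids the convenient high-dimensional cone tricks and leaves each gadget to be built and checked essentially by hand.
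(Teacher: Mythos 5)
Your NP-membership argument is fine and matches the paper. But the hardness half is a plan, not a proof: you name the three kinds of gadgets and the ``rigidity'' lemmas you would need, and then state in your last paragraph that the whole difficulty is in constructing gadgets satisfying those lemmas and in proving them --- which is precisely the content of the theorem. Nothing in the proposal exhibits a $3$-dimensional complex with the properties you postulate (a variable gadget whose first free move irrevocably commits to one branch; a clause gadget that can only be entered through a satisfied wire; seams that create no spurious free faces), and these properties are exactly where all the work lies. The paper spends its entire length on this: the literal gadget is a pair of Bing's houses with thick walls glued along an edge, the clause gadget is a Bing's house with three rooms and three collapsed thick walls (whose only free faces are the three literal edges), the conjunction gadget is another Bing's house whose sole free edge $e_{and}$ is shared with all clause gadgets, and the rigidity statements (Lemma~\ref{l:literal}(3), Lemma~\ref{l:and}(2), Lemma~\ref{l:clause}(2), Lemma~\ref{l:bl}(2)) are proved via explicit triangulations of the thick walls, partly borrowed from Malgouyres--Franc\'es. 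Moreover the genuinely new difficulty --- that the Malgouyres--Franc\'es complex only collapses to a $1$-complex with cycles, so one must fill those cycles (the ``disk gadgets'') without opening new collapsing routes that would let an unsatisfiable formula collapse --- is not addressed by your ``spine'' picture at all.

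Two of your specific claims would also need repair. First, ``since each piece is contractible, $K_\varphi$ is contractible, so the only possible obstruction to collapsing it is this combinatorial stuck phenomenon'' has no force: contractibility does not help either direction (contractible complexes such as the dunce hat or Bing's house are not collapsible, and the paper explicitly does not use contractibility of $K(\Phi)$); the satisfiable direction needs an explicit collapsing sequence, which the paper supplies gadget by gadget in Section~\ref{s:col_seq}. Second, your converse rests on a reordering lemma (``every collapse can be reordered so that \dots'') that you neither prove nor reduce to anything; the paper avoids any such reordering by instead analyzing what must already have been collapsed at the moment the first triangle of $K_{and}$ is removed, and reading the truth assignment off that prefix. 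Also, the intended failure mode for an unsatisfied clause is not that a neighbourhood of $W_j$ ``becomes a dunce hat with no free face''; in the paper the clause gadget simply never acquires a free face unless one of its literal edges is freed, and the chain of prerequisites (Lemmas~\ref{l:and}(2), \ref{l:clause}(2), \ref{l:bl}(2), \ref{l:literal}(3)) is what converts a hypothetical collapse into a satisfying assignment. As it stands, then, the proposal identifies the right strategy but contains none of the construction or verification that constitutes the proof.
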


It is easy to see that this problem belongs to NP (it is just sufficient to
guess a right sequence of elementary collapses), thus the core of our paper
relies on showing the NP-hardness.

By attaching a $d$-simplex to the complexes used in the proof of
Theorem~\ref{t:main} it is easy to observe that Theorem~\ref{t:main} is also valid if we
replace `$3$-dimensional' with `$d$-dimensional' for any $d \geq 4$. We provide
more details in the conclusion (Section~\ref{s:conclusion}).

\heading{Previous work.}
E\u{g}ecio\u{g}lu and Gonzalez~\cite{egecioglu-gonzalez96} have shown that it
is (strongly) 
NP-complete to decide whether a given $2$-dimensional complex can be collapsed
to a point by removing at most $k$ triangles where $k$ is a part of the input. This
problem, however, becomes polynomial-time solvable 
when $k$ is fixed as pointed out by Joswig and Pfetsch~\cite{joswig-pfetsch06} or Malgouyres and Franc\'{e}s~\cite{malgouyres-frances08}. In particular,
deciding whether a $2$-dimensional complex collapses to a point is
polynomial-time solvable. The same approach yields to the fact that deciding
whether a $d$-complex collapses to a $(d-1)$-complex is polynomial time
solvable. Since the author is not aware of a reference, we include a simple
proof here; see Proposition~\ref{p:codim}. 

Malgouyres and Franc\'{e}s~\cite{malgouyres-frances08} have shown that it is
NP-complete to decide whether a given $3$-dimensional complex collapses to some
$1$-complex. Naturally, they asked about the complexity of the problem of
deciding whether a given $3$-complex is collapsible. Theorem~\ref{t:main} 
answers this question in terms of NP-completeness.

Our approach relies, in a significant part, on the work of Malgouyres and
Franc\'{e}s. We sketch their proof as well as point out the differences in
Section~\ref{s:mf}. One of the important differences is that we need to replace
a very simple `clause gadget' of Malgouyres and Franc\'{e}s with something
more suitable for our setting. For this we need to introduce Bing's house with
three rooms, which is done in Section~\ref{s:bings}. Then we construct all 
gadgets needed for our reduction in Section~\ref{s:construction} and we finish
the proof of Theorem~\ref{t:main} in Section~\ref{s:collapse}.

\heading{Links to discrete Morse theory.} Using the result of E\u{g}ecio\u{g}lu and
Gonzalez~\cite{egecioglu-gonzalez96}, Joswig and Pfetsch~\cite{joswig-pfetsch06}
proved that it is (strongly) NP-complete to decide whether there exists a Morse
matching with at most $c$ critical cells where $c$ is a part of the input. 
Our main result can be reformulated in terms of Morse matchings in the
following way.

\begin{theorem}
\label{t:morse}
It is NP-complete to decide whether a given $3$-dimensional simplicial complex
admits a perfect Morse matching. 
\end{theorem}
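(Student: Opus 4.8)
The plan is to obtain Theorem~\ref{t:morse} as an immediate corollary of Theorem~\ref{t:main} via Forman's translation between collapsibility and discrete Morse theory. Recall that a \emph{Morse matching} on a simplicial complex $K$ is a partial matching $M$ on the Hasse diagram of $K$ in which every matched pair consists of a face $\sigma$ and a cofacet $\tau\supset\sigma$ with $\dim\tau=\dim\sigma+1$, subject to the acyclicity condition that the digraph obtained from the Hasse diagram by orienting each matched edge upward (from $\sigma$ to $\tau$) and each unmatched edge downward has no directed cycle; the unmatched faces are the \emph{critical cells}, and $M$ is \emph{perfect} when there is exactly one critical cell (which is then necessarily a vertex, since the minimum of a discrete Morse function is always attained on a vertex). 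By Forman~\cite{forman98}, $K$ is collapsible if and only if $K$ admits a perfect Morse matching. Hence the decision problem of Theorem~\ref{t:morse} is the same problem as that of Theorem~\ref{t:main}: the polynomial-time reduction used to prove NP-hardness there already produces $3$-dimensional instances, so NP-hardness transfers verbatim. Note that the earlier result of Joswig and Pfetsch~\cite{joswig-pfetsch06} does not by itself suffice for this, since there the number of critical cells is part of the input rather than fixed to one.

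For membership in NP, I would take a perfect Morse matching itself as the witness: it is a set of at most $|K|$ covering pairs, hence of polynomial size. Checking it is polynomial --- verify that the pairs are vertex-disjoint covering relations, that all faces but one are matched, and that the induced orientation of the Hasse diagram is acyclic, the last being an ordinary directed-cycle test on a digraph of polynomial size. (Alternatively, one may simply inherit membership in NP from the observation, already recorded after Theorem~\ref{t:main}, that collapsibility lies in NP, guessing a sequence of elementary collapses.)

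There is essentially no hard step here; the only care needed is bookkeeping: confirming that the notion of perfect Morse matching in the statement is precisely the one for which Forman's equivalence is stated, and that acyclicity of a Morse matching is genuinely testable in polynomial time rather than by an exponential search over potential cycles. Both are standard, so the proof reduces to citing Theorem~\ref{t:main}, quoting Forman's theorem, and performing these routine checks.
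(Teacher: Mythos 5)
Your proposal is correct and follows the same route as the paper, which obtains Theorem~\ref{t:morse} directly from Theorem~\ref{t:main} via Forman's equivalence between collapsibility and the existence of a perfect Morse matching; your additional remarks on NP membership (the matching as a polynomial-size certificate with an acyclicity check) are routine and consistent with the paper's observation that the problem lies in NP.
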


In the short proof below we keep a few notions undefined. We refer to Joswig
and Pfetsch~\cite{joswig-pfetsch06} for these notions as well as for further details on Morse matchings in the
computational complexity context.

\begin{proof}
We use the fact that a simplicial complex $K$ collapses to a point $x$
if and only if it admits a Morse function such that $K \setminus \{x, 
\emptyset\}$ contains no critical cells (see Forman~\cite{forman98}). 
Therefore, $K$ is collapsible if and only if it admits a perfect Morse matching
(see the exposition by Joswig and Pfetsch~\cite{joswig-pfetsch06}).
Consequently, Theorem~\ref{t:morse} is equivalent to Theorem~\ref{t:main}.
\end{proof}

Efficient search for collapsing sequences also plays an important role for
searching a CW-complex homotopy equivalent to a given simplicial complex with
number of critical cells as small as possible. This has further impact on
efficient homology computations. See~\cite{benedetti-lutz14}.

\heading{Links to shape-reconstruction.} The task of shape reconstruction is to
reconstruct a shape from a set of points that sample it. An important
subtask is to reconstruct the homotopy type or the homeomorphism type of the shape. In a recent work of
Attali and Lieutier~\cite{attali-lieutier_online_first15}, the aim is to collapse
the Rips complex or the \v{C}ech complex of the sampling set to a complex
homeomorphic with the shape. In this context, Theorem~\ref{t:main} means certain
limitations for results one can expect. In particular, specific treatment
in~\cite{attali-lieutier_online_first15}, using the 
properties of the Rips and \v{C}ech complexes, seems important.

\heading{Another notion of collapsibility.} In the context of Helly-type theorems
in discrete geometry, Wegner~\cite{wegner75} introduced a notion of
$d$-collapsibility. This notion shares some properties with collapsibility, but
for example, it does not preserve the homotopy type. The author has shown
in~\cite{tancer10np} that the recognition of $d$-collapsible complexes is NP-hard
for $d \geq 4$. We remark that the approach in that case is different and
the result from~\cite{tancer10np} should not be confused with the result
presented here on classical (Whitehead's) collapsibility.

\section{Preliminaries}

\heading{Simplicial complexes.}
We work with finite \emph{(abstract) simplicial complexes}, that is, with
set systems $K \subseteq 2^V$ such that $V$ is a finite set and 
if $\alpha \in K$ and $\beta \subset \alpha$, then $\beta \in K$. We recall few
basic definitions; however, we also assume
that the reader is familiar with some basic properties of simplicial complexes.
Otherwise we refer to any of the books~\cite{hatcher01,matousek03,munkres84}.
In particular, we assume that the reader is familiar with the correspondence of
abstract simplicial complexes and geometric simplicial complexes since it will
be very convenient in the further text to define some simplicial complexes by
pictures.

Elements of a simplicial complex $K$ are \emph{faces} (or \emph{simplices}).
A \emph{$k$-face} is a face of \emph{dimension} $k$, that is, a face in $K$ of
size $k + 1$. $0$-dimensional, $1$-dimensional, and $2$-dimensional faces are
\emph{vertices}, \emph{edges}, and \emph{triangles} respectively. 

When we consider a simplicial complex as an input for an algorithm, it is given
by a list of all faces.

\heading{Collapsibility.}
Let $\sigma$ be a nonempty non-maximal face of $K$. We say that $\sigma$ is \emph{free}
if it is contained in only one maximal face $\tau$ of $K$. 
Let $K'$ be the simplicial complex obtained from $K$ by removing $\sigma$ and
all faces above $\sigma$, that is,
$$
K' := K \setminus \{\vartheta \in K\colon \sigma \subseteq \vartheta \}.
$$
We say that $K'$ arises from $K$ by an \emph{elementary collapse} (induced by $\sigma$ 
and $\tau$). 
We say that a complex $K$ \emph{collapses} to
a complex $L$ if there exist a sequence of complexes $(K_1 = K, K_2, \dots, K_{m-1},
K_m = L)$, called a \emph{sequence of elementary collapses} (from $K$ to $L$), such that $K_{i+1}$ arises 
from $K_i$ by an elementary collapse for any $i \in
\{1, \dots, m-1\}$. A simplicial complex $K$ is \emph{collapsible} if it collapses to
a point. 

Let $(K_1 = K, K_2, \dots, K_{m-1}, K_m = L)$ be a sequence of elementary
collapses. Then for
every $\eta \in K \setminus L$ there is a unique complex $K_i$ such that $\eta
\in K_i$ and $\eta \not \in K_{i+1}$. Then we say that $\eta$ \emph{collapses}
in this step. In particular, we will often use phrases such as `$\eta_1$
collapses before $\eta_2$'.

\ifabstr
\else

\heading{Collapsibility with constrains.}
In our constructions, we will often encounter the following situation: We will
be given a complex $L$ glued to some other complexes forming a complex $M$. We
will know some collapsing sequence of $L$ and we will want to use this
collapsing sequence for $M$. This might or might not be possible. We will set
up a sufficient condition.

\begin{definition}
Let $M$ be a simplicial complex and $L$ be a subcomplex of $M$. We define the
\emph{constrain complex} of pair $(M,L)$ as
$$
\Gamma = \Gamma(M,L) := \{\vartheta \in L\colon \vartheta \subseteq \eta \hbox{ for some
} \eta \in M \setminus L\}.
$$
\end{definition}

The constrain complex is obviously a subcomplex of $L$. Now we can present
an elementary condition when collapsing of $L$ induces collapsing of $M$.

\begin{lemma}
\label{l:constrain}
Let $M$ be a complex, $L$ subcomplex of $M$ and $\Gamma$ be the constrain complex of
$(M,L)$. We also assume that $L$ collapses to $L'$ containing $\Gamma$. Then $M$
collapses to
$$M' := L' \cup (M \setminus L).$$
\end{lemma}

\begin{proof}
Let $(L_1 = L, L_2, \dots, L_{m-1}, L_m = L')$ be a sequence of elementary
collapses. Let $\sigma_i$ be the face of $L_i$ which is collapsed in order to
obtain $L_{i+1}$ and let $\tau_i$ be the unique maximal face in $L_i$
containing $\sigma_i$. We also set $M_i = L_i \cup (M \setminus L)$. 
The assumption ensures us that all superfaces of $\sigma_i$ in $M_i$ belong
to $L_i$. Therefore $(M_1 = M, M_2, \dots, M_{m-1}, M_m = M')$ is a sequence of
elementary collapses still induced by $\sigma_i$ and $\tau_i$. 
\end{proof}
\fi

\heading{Collapsibility in codimension 1.}
Here we show that collapsibility in codimension 1 is polynomial-time solvable.
This result is not needed for the proof of Theorem~\ref{t:main}; it only serves
as a complementary result.  We need the proposition below. 
The proposition implies that 
we can
collapse an input $d$-complex $K$ greedily, and with this
greedy algorithm, we
obtain a $(d-1)$-complex if and only if $K$ collapses to a $(d-1)$-complex $L$.
\begin{proposition}
\label{p:codim}
Let $K$ be a $d$-complex which collapses to a $(d-1)$-complex $L$ and to some
$d$-complex $M$. Then $M$ collapses to a $(d-1)$-complex.
\end{proposition}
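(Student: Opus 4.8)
The plan is to reformulate ``collapses to a $(d-1)$-complex'' as a purely combinatorial condition on the $d$- and $(d-1)$-faces, and then to observe that this condition is inherited by a subcomplex that has fewer $d$-faces.

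First I would establish the following characterization: \emph{a $d$-complex $N$ collapses to a $(d-1)$-complex if and only if its $d$-faces can be listed as $\tau_1,\dots,\tau_n$ so that there are pairwise distinct $(d-1)$-faces $\rho_1,\dots,\rho_n$ with $\rho_i\subseteq\tau_i$ and $\rho_i\not\subseteq\tau_j$ whenever $j>i$.} The ``if'' direction is a greedy collapse: the condition forces $\rho_1$ to be free, since its only coface, hence its only maximal coface, is $\tau_1$; collapsing $\rho_1$ therefore deletes exactly the pair $\{\rho_1,\tau_1\}$, after which $\rho_2$ is free for the same reason, and so on, so that $N$ collapses to a complex with no $d$-faces. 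For the ``only if'' direction I would use the observation that in a complex of dimension at most $d$ an elementary collapse deletes the interval spanned in the face poset by the collapsed free face $\sigma$ and its unique maximal coface $\mu$, and that this interval contains at most one $d$-face — namely $\mu$, and only when $\dim\mu=d$. Consequently exactly $n$ of the steps of any collapsing sequence $N\to\cdots\to L$ with $\dim L\le d-1$ destroy a $d$-face; I would order the $d$-faces by the step at which they disappear, and when $\tau_k$ disappears — at the collapse of some free face $\sigma\subsetneq\tau_k$ — take $\rho_k$ to be any $(d-1)$-face with $\sigma\subseteq\rho_k\subseteq\tau_k$. Because $\sigma$ is free at that moment, the only maximal face of the current complex containing $\sigma$, and hence the only one containing $\rho_k$, is $\tau_k$; each $\tau_j$ with $j>k$ is still present and maximal, so $\rho_k\not\subseteq\tau_j$, and distinctness of the $\rho_k$ follows in the same way.

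Granting the characterization, the proposition follows quickly. Applying it to $K$ produces an ordering $\tau_1,\dots,\tau_n$ of the $d$-faces of $K$ together with $(d-1)$-faces $\rho_1,\dots,\rho_n$ as above. Since $M$ is a collapse of $K$ it is a subcomplex of $K$, so the $d$-faces of $M$ form a sublist $\{\tau_i:i\in I\}$ for some $I\subseteq\{1,\dots,n\}$; and for $i\in I$ the inclusion $\rho_i\subseteq\tau_i\in M$ forces $\rho_i\in M$, as $M$ is closed under taking subsets. Restricting the ordering and the assignment $i\mapsto\rho_i$ to $I$ gives data of exactly the form required by the characterization for $M$ — both the pairwise distinctness and the condition $\rho_i\not\subseteq\tau_j$ for $j>i$ obviously survive the restriction — so $M$ collapses to a $(d-1)$-complex. (If $\dim M\le d-1$ there is nothing to prove, as $M$ already is such a complex.)

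I expect the only genuine work to sit in the ``only if'' half of the characterization, where one must verify that each chosen $(d-1)$-face $\rho_k$ really is contained in no later $\tau_j$. This becomes transparent once one records that the free face collapsed at the step where $\tau_k$ dies lies in a single maximal face, namely $\tau_k$, so that the whole interval removed at that step — and $\rho_k$ in particular — is disjoint from every $d$-face that is still present. Everything else is routine bookkeeping.
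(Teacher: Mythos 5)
Your proof is correct and takes essentially the same route as the paper's: there, too, one orders the $d$-faces by the step at which they disappear in the given collapse of $K$ to $L$, and uses the freeness of the collapsed face at that step to conclude that the earliest such $d$-face still present in $M$ has a $(d-1)$-face contained in no other $d$-face of $M$, which is then collapsed greedily and the argument repeated. You merely package this as an explicit if-and-only-if certificate that restricts to subcomplexes, a harmless (and slightly more general) reformulation of the same idea.
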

\begin{proof}
Let $(K_1 = K, K_2, \dots, K_{m-1}, K_m = L)$ be a sequence of elementary
collapses where the collapse from $K_i$ to $K_{i+1}$ is induced by faces $\sigma_i$ and
$\tau_i$. Note that every $d$-dimensional face of $K$ is $\tau_i$ for some $i$.

Let $j$ be the smallest index such that $\tau_j$ belongs to $M$ and let $\eta_j$ be 
a $(d-1)$-face with $\sigma_j \subseteq \eta_j \subset \tau_j$. Note also that
no $\tau_i$ with $i < j$ belongs to $M$ due to our choice of $j$. However,
since $\sigma_j$ is free in $K_j$ and therefore $\eta_j$ is free as well, the
only $d$-faces of $K$ containing $\eta_j$ might be the faces $\tau_i$ with $i
\leq j$. Altogether, $\tau_j$ is the unique $d$-face of $M$ containing $\eta_j$
and we can collapse $\eta_j$ (removing $\tau_j$). If $M$ is still
$d$-dimensional, we repeat our procedure. After finitely many steps we obtain a
$(d-1)$-complex.  
\end{proof}

\section{The approach by Malgouyres and Franc\'{e}s}
\label{s:mf}

In this section we describe the approach of Malgouyres and
Franc\'{e}s~\cite{malgouyres-frances08}. In
some steps we follow their approach almost exactly; however, there are also
steps that have to be significantly 
modified in order to obtain our result. We will emphasize the steps where our approach differs.

The reduction is done, as usual, from $3$-satisfiability problem which is well
known to be NP-complete. We assume that the reader is familiar with the related
terminology.
Given a $3$-CNF formula\footnote{That is, a formula in conjunctive normal form.} $\Phi$, Malgouyres and Franc\'{e}s construct a
$3$-di\-men\-sion\-al complex $\C(\Phi)$ such that $\C(\Phi)$ collapses to a
$1$-complex if and only if $\Phi$ is satisfiable. They compose $\C(\Phi)$ of
several smaller complexes that we will call \emph{gadgets}. For every literal
$\ell$ in the formula they introduce a \emph{literal gadget} $\C(\ell)$.
(This includes introducing $\C(\bar \ell)$ where $\bar \ell$ is the negation of
$\ell$.) The gadgets $\C(\ell)$
and $\C(\bar \ell)$ are glued along an edge so that a major part of only
$\C(\ell)$ or only $\C(\bar \ell)$ can be collapsed in the first phase of
collapsing. Another gadget is a \emph{conjunction gadget} $\C_{and}$ glued to literal
gadgets via \emph{clause gadgets} so that $\C_{and}$ can be collapsed at some
step if and only if every clause contains a literal $\ell$ such that
the major part of $\C(\ell)$ was already collapsed, that is, if and only if
$\Phi$ is satisfiable. As soon as $\C_{and}$ is
collapsed, it makes few other faces of the literal gadgets free which enables
to collapse the
whole complex to a $1$-dimensional complex. As it follows from the
construction of Malgouyres and Franc\'{e}s, if the formula is satisfiable, 
the resulting $1$-complex contains many cycles and therefore it cannot be further collapsed to a point.

Our idea relies on filling the cycles of the resulting $1$-complex so that we
can further proceed with collapsings. However, we cannot fill the cycles
completely naively, since we do not know in advance which $1$-complex we obtain.
In addition filling these cycles naively could possibly introduce new
collapsing sequences starting with edges on the boundaries of the filled cycles
which could possibly yield to collapsing the complex even if the formula were
not satisfiable. Therefore, we have to be very careful with our 
\ifabstr
construction.
\else  
construction (which unfortunately means introducing few more technical
steps).\fi

We are going to construct a simplicial complex $K(\Phi)$ such that $K(\Phi)$ collapses to a point if and only if
$\Phi$ is satisfiable. 
In fact, our complex $K(\Phi)$ will always be
contractible, independently of $\Phi$ (although we do not need this fact in our
reduction; and therefore we do not prove it). 
We reuse literal and conjunction gadgets of
Malgouyres and Franc\'{e}s (only with minor modifications regarding
distinguished subgraphs). 
Unfortunately, we need to replace the very simple clause
gadget of Malgouyres and Franc\'{e}s (it consists of two triangles sharing an
edge or is even simpler, depending on the clause). For this we need to introduce Bing's house with three
rooms and three thick walls. We also need \emph{disk gadgets} which fill the
cycles in the resulting $1$-complex. We remark that the disk gadgets will not be
topological disks but only some contractible complexes. However, we keep the
name disk gadgets because of the idea of filling the cycles. 

\section{Bing's rooms and Bing's house with three rooms}
\label{s:bings}

In our reduction we will need several auxiliary constructions that we suitably
glue together. We present them in this section.

\heading{Bing's rooms.} We will consider Bing's house as a simplicial
complex obtained by gluing two smaller simplicial complexes called Bing's
rooms. Later we will use these rooms for building more complicated Bing's
house with three rooms. \emph{Bing's room with a thin wall} is a complex
depicted in Figure~\ref{f:rooms} on the left and \emph{Bing's room with a thick
wall} is in the middle. The room with a thin wall contains only $2$-dimensional
faces whereas the room with a thick wall contains one $3$-dimensional block
obtained by thickening one of the walls. Both rooms contains two holes in the ground
floor and one hole in the roof. If, starting from Bing's house with thick wall,
we collapse away the thick wall, we obtain a complex that we call
\emph{Bing's room with collapsed thick wall}, shown on the right. 
(Note that the left bottom edge of the collapsed thick wall is
still present although it is not contained in any 2-dimensional face.)

\begin{figure}[t]
\begin{center}
\includegraphics{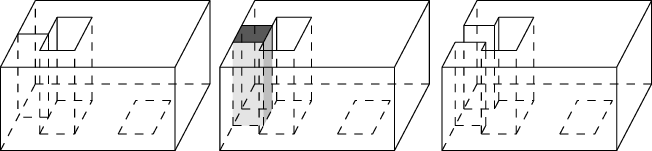}
\caption{Bing's room with a thin wall (on the left), Bing's room with a
thick wall (in the middle), and Bing's room with collapsed thick wall (on the
right).}
\label{f:rooms}
\end{center}
\end{figure}  

\heading{Bing's house with \ifabstr two thick walls.\else one thin and one
thick wall.\fi} If we rotate the
ground floor
of one of the rooms and we glue the two rooms together along the ground floor, we
obtain \emph{Bing's house with \ifabstr two thick walls%
  \else one thin and one
thick wall\fi} as introduced by
Malgouyres and Franc\'{e}s~\cite{malgouyres-frances08}. See
Figure~\ifabstr\ref{f:literal_g}\else\ref{f:thin_thick}\fi. 
\ifabstr
\else
\begin{figure}
\begin{center}
\includegraphics{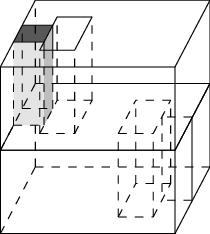}
\caption{Bing's house with one thin and one thick wall.}
\label{f:thin_thick}
\end{center}
\end{figure}  
\fi
Similarly, we can obtain Bing's house with two thin walls or Bing's house
\ifabstr with one thin and one thick wall.\else with two
thick walls.\fi

\heading{Bing's house with three rooms and three thick walls.} As an auxiliary construction, we
also need to introduce Bing's house with three rooms. First we consider
the \emph{base floor} depicted 
\ifabstr
in~Figure~\ref{f:base_plus}, left.
\else
in~Figure~\ref{f:template}.
\fi 
It consists of three
quadrilaterals with holes, glued together. For simplicity of explanations, we
will assume that all these three quadrilaterals are squares. Also, we assume
that that the holes are squares. Now we consider three Bing's rooms with thick walls labeled $1$,
$2$ and $3$. The Bing room with label $i$ is glued to the two squares with
label $i$ so that the grey part of one of the squares with label $i$ is the
place where the thick wall of the room is glued to the base floor. Here, it is
important that we do not have to distinguish whether the rooms are glued to the
base floor from below or from above, since we could not place them in such a way
(in 3D) simultaneously.\footnote{It can be shown that the resulting complex
does not topologically embed into $\R^3$ since it contains a
M\"{o}bius band and an annulus glued to the central cycle of the band along
one of the boundary components of the annulus. (The annulus is one third of the
base floor and the band is formed by parts of the outer walls of the three
rooms.) However, this non-embeddability fact is far beyond the
needs of this paper.} 
The resulting complex we call \emph{Bing's house with three rooms (and
three thick walls)}. We remark that Bing's house with three rooms is
contractible which can be shown in a similar way as contractibility of
classical Bing's house. (If glue cuboid bricks to the base floor instead of
Bing's rooms, we obtain a complex $L$ which is obviously contractible. Bing's house
with three rooms is obtained by `digging holes' into $L$.) Later on, we will need a specific collapsing sequence of
Bing's house with three rooms. The existence of such a sequence implies
contractibility as well.

\ifabstr
\begin{figure}[t]
\begin{center}
\includegraphics{base_plus_two_collapsed_blocks.eps}
\caption{The base floor of Bing's house with three rooms and with three thick
walls on the left and two blocks only when the walls are collapsed on the right.}
\label{f:base_plus}
\end{center}
\end{figure}

\else
\begin{figure}
\begin{center}
\includegraphics{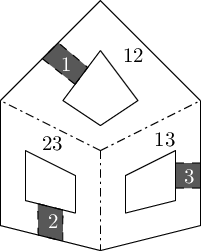}
\caption{The base floor of Bing's house with three rooms and with three thick
walls.}
\label{f:template}
\end{center}
\end{figure}
\fi

\heading{Bing's house with three collapsed walls.} For further purposes it will
be convenient to work with Bing's house with three rooms where the thick 
walls are collapsed. We let each of the thick walls collapse to the edge on the
base floor. This way we obtain \emph{Bing's house with three
collapsed walls}. We provide the reader with a drawing with two rooms
only (this can still be done in three dimensions); 
\ifabstr
see Figure~\ref{f:base_plus}, right. 
\else
see Figure~\ref{f:coll_walls}. 
\fi
We also distinguished edges $x_1$, $x_2$ and $x_3$
such that $x_i$ is the only remaining edge of the thick wall in room $i$ after
collapsing the wall. Note that $x_1$, $x_2$ and $x_3$ are the only free faces
of Bing's house with three collapsed walls. See also Figure~\ref{f:clause_g}
for the base floor.

\ifabstr\else
\begin{figure}
\begin{center}
\includegraphics{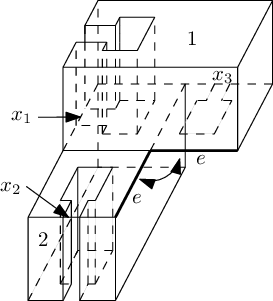}
\caption{Two blocks of Bing's house with three collapsed walls. The edges
marked with $e$ are glued together.}
\label{f:coll_walls}
\end{center}
\end{figure}
\fi

\heading{Triangulations.}
In order to obtain simplicial complexes we need to triangulate our gadgets that
we obtain from Bing's rooms, Bing's houses etc. It will not be important for
us how do we precisely triangulate pieces in the construction of dimensions $2$
or less. \ifabstr We will 
just require that the size of triangulation is polynomial
in the size of the formula in our reduction.
\else
For example, the middle level of Bing's house with one thick and one
thin wall can be triangulated as suggested in Figure~\ref{f:triangulating}
keeping in mind that the triangulations of particular $2$-cells have to be
compatible on the intersections. In some cases, we will need gadgets with many
prescribed edges in some part of the triangulation where the number of these edges depends on the size of the 
$3$-CNF formula we will work with (see Figure~\ref{f:conj_g}~or~\ref{f:bl}). In
such cases we require that the size of the triangulation is polynomial in the
number of prescribed edges.

\begin{figure}
\begin{center}
\includegraphics{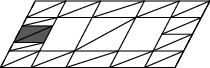}
\caption{A suitable triangulation of the middle level of Bing's house with one
thick and one thin wall.}
\label{f:triangulating}
\end{center}
\end{figure}
\fi

The only $3$-cells appearing in our construction are thick walls of Bing's rooms
(houses). For these thick walls we use particular triangulations of Malgouyres
and Franc\'{e}s~\cite{malgouyres-frances08}. 
\ifabstr
See Figure~\ref{f:literal_g}, right (the prisms are further subdivided which is
not drawn).
\else
The thick wall is subdivided into
four prisms $012389$, $014589$, $236789$ and $456789$. See
Figure~\ref{f:triang_thick}, left. Each prism is further subdivided into two
simplices (which are not shown on the picture). 
This triangulation allows
collapsing the thick wall into two smaller complexes from
Figure~\ref{f:triang_thick}, middle and right (in the middle picture, the edge
$01$ is contained in no $2$-cell; the right picture is drawn from behind and the edge $89$ is contained
in no $2$-cell). In particular, the collapsing from the middle picture is used
when obtaining Bing's room with a collapsed thick wall from Bing's room with a
thick wall.

\begin{figure}
\begin{center}
\includegraphics{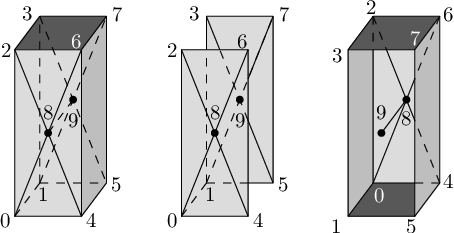}
\caption{Collapsings of the thick wall. (Usually, we use gray only for
$3$-cells or attachments of $3$-cells. In this case, the objects in the middle
and on the right are two dimensional while grey emphasize which $2$-cells are
still in the object.)}
\label{f:triang_thick}
\end{center}
\end{figure}
\fi

\section{Construction of the reduction gadgets}
\label{s:construction}
Here we start filling in details of the construction sketched in
Section~\ref{s:mf}.

Given a $3$-CNF formula $\Phi$ we construct a three-dimensional 
simplicial complex $K(\Phi)$ such that $K(\Phi)$ is collapsible if and only if
$\Phi$ is satisfiable. We assume that every clause of $\Phi$ contains exactly
three literals and also that no clause contains a literal and its negation.
\ifabstr
\else

\fi
The complex $K(\Phi)$ will consist of several gadgets described below.
\ifabstr
We will present here several lemmas on collapsing the gadgets. The proofs of
the lemmas are given in the full version (see Appendix).
\else
For
each of the gadgets we also need to find some suitable collapsing sequence. We
usually postpone the proofs that such sequences exist to Section~\ref{s:col_seq}
so that the main idea can be explained while the technical details are left to
the end.
\fi

\heading{Literal gadget.}
First we establish the \emph{literal gadget} $K(\ell, \bar \ell)$ for every pair of literals
$\ell$ and $\bar \ell$. This gadget is by Malgouyres and
Franc\'{e}s~\cite{malgouyres-frances08}, we only glue it to other gadgets in a
different way. It consists of two smaller gadgets $X(\ell)$
and $X(\bar \ell)$ suitably glued together. 

We set $X(\ell)$ to be Bing's house with two thick walls as
in~Figure~\ref{f:literal_g}. It contains two
distinguished edges $e(\ell)$ and $f(\ell)$. Furthermore, it contains a
distinguished path $p(\ell)$
joining the common vertex of $e(\ell)$ and $f(\ell)$ 
with the upper thick wall (this path
contains neither $e(\ell)$ nor $f(\ell)$). Let us emphasize that in this case, we use particular
triangulation by Malgouyres and Franc\'{e}s~\cite{malgouyres-frances08} that
subdivides the upper thick wall into four prisms which are further
triangulated. The path $p(\ell)$ enters the upper wall in vertex $0$ of this
triangulation and it continues to vertex $8$. For $\bar \ell$ we construct
$X(\bar \ell)$ analogously.

The complex $K(\ell, \bar \ell)$ is composed of $X(\ell)$ and $X(\bar \ell)$
glued together along edge $89$. The common vertex $8$ will be important for
further constructions; and therefore we rename it to $u_{\ell, \bar \ell}$
emphasizing dependency on $\ell$ and $\bar \ell$. The following lemma describes
a particular sequence of collapsing the literal gadget that we will use later.
\ifabstr\else
It also says that at least one of the edges $f(\ell)$, $f(\bar \ell)$ has to be
collapsed before collapsing the literal gadget to a $2$-complex.
\fi

\begin{figure}[t]
\begin{center}
\includegraphics{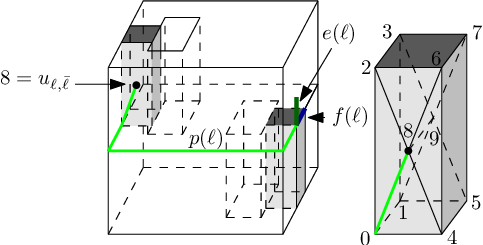}
\caption{The complex $X(\ell)$ from the literal gadget with distinguished edges and
paths. The detailed triangulation of the upper thick wall is on the right.}
\label{f:literal_g}
\end{center}
\end{figure}  

\begin{lemma}
\label{l:literal}
\noindent
\begin{enumerate}
\item
$K(\ell, \bar \ell)$ collapses to a complex that contains only path $p(\ell)$
and edges $e(\ell)$ and $f(\ell)$ from $X(\ell)$ while it contains almost all
$X(\bar \ell)$ with exception that the upper thick wall of $X(\ell)$ 
was collapsed to a thin wall keeping only rectangles $0462$, $0451$, $4576$,
$2673$, and $0132$.\footnote{For simplicity of notation, we keep the same
numbers of vertices either for upper thick wall of $X(\ell)$ or of $X(\bar
\ell)$. However, we once more emphasize that that these two walls share the
edge $89$ only.} \ifabstr\else(Consult Figure~\ref{f:triang_thick} right, if
you remove the edge $89$.)\fi  
\ The~role of
$\ell$ and $\bar \ell$ can be interchanged.
\item 
Let $L(\ell, \bar \ell)$ be the complex resulting in item 1 without the
edge $e(\ell)$. This complex further collapses to the union of the paths
$p(\ell)$, $p(\bar \ell)$ and the edge $e(\bar \ell)$.
\ifabstr\else
\item{Let $T(\ell)$ be any of the two triangles containing $e(\ell)$ and
$T(\bar \ell)$ be any triangle containing $e(\bar \ell)$. Before collapsing
both $T(\ell)$ and $T(\bar \ell)$, at least one of the edges $f(\ell)$, $f(\bar
\ell)$ must be collapsed.}
\fi
\end{enumerate}

\end{lemma}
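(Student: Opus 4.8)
The plan is to prove items~1 and~2 by exhibiting explicit collapsing sequences, and to prove item~3 by a rigidity argument that exploits the fact that the $2$-dimensional house underlying a Bing's house has no free face, so that any collapse of $K(\ell,\bar\ell)$ is forced to begin inside one of the thick walls and then to propagate in an essentially unique way. I expect item~3 to be the main obstacle; items~1 and~2 are, by comparison, bookkeeping.

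\emph{Items 1 and 2.} I would use the characteristic feature of a Bing's house with thick walls: collapsing a thick $3$-cell through one of its free boundary triangles leaves an edge lying in no $2$-cell, after which the otherwise rigid $2$-dimensional house collapses onto any prescribed subtree of its $1$-skeleton. For item~1, first collapse the upper thick wall of $X(\bar\ell)$ to the thin wall consisting of the rectangles $0462$, $0451$, $4576$, $2673$, $0132$, using the prism subdivision of Malgouyres and Franc\'{e}s and the collapse of Figure~\ref{f:triang_thick}; this touches no face shared with $X(\ell)$ other than possibly the edge $89$, so Lemma~\ref{l:constrain} applies. Then collapse both thick walls of $X(\ell)$ through free boundary triangles and continue collapsing the remaining $2$-complex of $X(\ell)$ --- driven from the upper wall down along the path $p(\ell)$ --- all the way to $p(\ell)\cup e(\ell)\cup f(\ell)$, checking that no edge of $p(\ell)$, neither $e(\ell)$ nor $f(\ell)$, and not the shared edge $89$, is ever used as the free face or as the maximal face of an elementary collapse. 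Since $X(\ell)$ meets the rest of $K(\ell,\bar\ell)$ only along $89$, the constrain complex $\Gamma(K(\ell,\bar\ell),X(\ell))$ is the closure of $89$, and Lemma~\ref{l:constrain} transports the collapse of $X(\ell)$ to all of $K(\ell,\bar\ell)$; the construction is symmetric in $\ell$ and $\bar\ell$. For item~2 one starts from $L(\ell,\bar\ell)$: the edge $f(\ell)$ is now a dangling edge, attached to the rest only at its endpoint on $p(\ell)$, so collapsing its free endpoint removes it without disturbing $p(\ell)$; and the remaining copy of $X(\bar\ell)$, whose upper wall has already been thinned, collapses onto $p(\bar\ell)\cup e(\bar\ell)$ by the analogous sequence (collapse its remaining thick wall, then its $2$-dimensional house), with $p(\ell)$ simply left in place. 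Lemma~\ref{l:constrain} again handles the gluing, now only at the vertex $u_{\ell,\bar\ell}$. The real content is verifying that the distinguished faces survive each step, and here I would arrange the intermediate complexes so that they literally match Figure~\ref{f:triang_thick}.

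\emph{Item 3.} Assume for contradiction a collapsing sequence of some complex containing $K(\ell,\bar\ell)$ in which both a triangle $T(\ell)\ni e(\ell)$ and a triangle $T(\bar\ell)\ni e(\bar\ell)$ are collapsed, while neither $f(\ell)$ nor $f(\bar\ell)$ is ever collapsed. Each of $T(\ell)$ and $T(\bar\ell)$ lies in a $2$-dimensional house and in no $3$-cell, so it can disappear only as the maximal face accompanying the collapse of one of its three edges. Because the $2$-dimensional house of a Bing's house has no free face, an edge of that house becomes free only after an incident triangle has already been removed; iterating gives a chain of collapses inside $X(\ell)$ that ends at $T(\ell)$ and whose first step can only have been enabled by collapsing a thick wall of $X(\ell)$. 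Inspecting which edges such a chain can use --- this is where the precise placement of $e(\ell)$, $f(\ell)$ and the path $p(\ell)$ in the Malgouyres--Franc\'{e}s gadget matters --- shows that the chain reaching $T(\ell)$ must either collapse $f(\ell)$ or ``escape'' through the shared edge $89$ at $u_{\ell,\bar\ell}$, and in the latter case the symmetric chain reaching $T(\bar\ell)$ can no longer use $89$ and is forced to collapse $f(\bar\ell)$. Either way one of $f(\ell)$, $f(\bar\ell)$ has been collapsed before both $T(\ell)$ and $T(\bar\ell)$, a contradiction. The hard part is precisely this inspection: one must show rigorously that the collapse near $e(\ell)$ can only be routed along $p(\ell)$ from a thick wall, and that the bottleneck at $89$ really does forbid servicing both $e(\ell)$ and $e(\bar\ell)$ without spending $f(\ell)$ or $f(\bar\ell)$; this is the combinatorial heart of the gadget and refines the corresponding analysis of Malgouyres and Franc\'{e}s.
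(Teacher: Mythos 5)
Your treatment of item~1 contains a stipulation that is not only unnecessary but false, and it undercuts the whole mechanism of the gadget. You require that the shared edge $89$ is \emph{never} used as the free face (nor otherwise collapsed) while $X(\ell)$ is being collapsed. But the entire point of first collapsing the upper thick wall of $X(\bar\ell)$ to the thin wall $0462$, $0451$, $4576$, $2673$, $0132$ is to make $89$ free: $89$ is then precisely the free face from which the upper thick wall of $X(\ell)$ is collapsed in the \emph{other} manner (Figure~\ref{f:triang_thick}, middle), leaving its edge $01$ in no $2$-cell, and only this freed interior edge allows the remaining $2$-dimensional house of $X(\ell)$ to be collapsed down to $p(\ell)\cup e(\ell)\cup f(\ell)$; consistently, the complex described in item~1 no longer contains the edge $89$ at all. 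Moreover, a collapse with the property you demand cannot exist: if $X(\ell)$ could be collapsed inside $K(\ell,\bar\ell)$ to $p(\ell)\cup e(\ell)\cup f(\ell)$ using no face of $X(\bar\ell)$ and never touching $89$, then by symmetry the same would hold for $X(\bar\ell)$; the two collapsing sequences would use disjoint sets of faces, so they could be performed one after the other, removing both triangles at $e(\ell)$ and all triangles at $e(\bar\ell)$ while keeping $f(\ell)$ and $f(\bar\ell)$ --- contradicting item~3 (which is true, by Malgouyres and Franc\'es) and, with it, the truth-assignment mechanism of the whole reduction. For the same reason, collapsing the upper thick wall of $X(\ell)$ ``through its own free boundary triangles'' cannot produce the $01$-exposing configuration; without the freed edge $89$ it can only produce the $89$-exposing thinning, which does not unlock the house. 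So the route you explicitly forbid is the one the proof must take.

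Your item~3 is a plan rather than a proof: the decisive step --- the ``inspection'' showing that any chain of collapses freeing a triangle at $e(\ell)$ must either spend $f(\ell)$ or be routed through the $01$-exposing collapse of the upper thick wall, that this collapse needs $89$ as a free face, and that $89$ can serve only one of the two sides --- is exactly the combinatorial content of the statement, and you leave it unverified (as you yourself note). It genuinely depends on the precise Malgouyres--Franc\'es triangulation of the thick walls and of the walls meeting $e(\ell)$; the paper does not redo this analysis but invokes Remark~1, Example~3 and the proof of Theorem~4 of~\cite{malgouyres-frances08}. Without either carrying out that case analysis or citing it, item~3 remains unproven. A smaller point on item~2: after removing the pendant edge $f(\ell)$, the collapse of what remains of $X(\bar\ell)$ is started from the edge $f(\bar\ell)$, which is now free in $L(\ell,\bar\ell)$ and serves as the free face that sets the lower thick wall of $X(\bar\ell)$ collapsing; your phrase ``collapse its remaining thick wall, then its $2$-dimensional house'' should make this explicit, and should record that $f(\bar\ell)$ (like $f(\ell)$) is removed, since the target of item~2 contains neither of these edges.
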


\ifabstr
\else
\begin{proof}
We postpone the proof of items 1 and 2 on precise collapsing sequences
to Section~\ref{s:col_seq}. Item~3 is already proved by Malgouyres and
Franc\'{e}s; see Remark~1, Example~3 and the 
proof of Theorem~4 in~\cite{malgouyres-frances08}. We sketch here that if neither $f(\ell)$ nor $f(\bar \ell)$
is collapsed, then the only one of the two upper thick walls, one of $X(\ell)$
and one of $X(\bar \ell)$, can be collapsed so that its edge $01$ becomes free. 
Hence, only one of the triangles $T(\ell)$ and $T(\bar \ell)$ might become
free before collapsing $f(\ell)$ or $f(\bar \ell)$.
\end{proof}
\fi

\heading{Conjunction gadget.}
Next we define the \emph{conjunction gadget} $K_{and}$. 
It is Bing's house with one collapsed thick wall and one thin wall. See~Figure~\ref{f:conj_g}
on the left. We also distinguish several edges and vertices of the gadget.

 As an auxiliary construction, for every pair $\ell$, $\bar \ell$ of literals,
we create an anchor-shaped tree $A(\ell, \bar \ell)$ formed of $u_{\ell, \bar \ell}$,
$p(\ell)$, $p(\bar \ell)$, $f(\ell)$ and $f(\bar \ell)$ from $K(\ell, \bar
\ell)$ and furthermore of newly introduced edge $a(\ell, \bar \ell)$ and vertex
$v_{and}$. See~Figure~\ref{f:conj_g} on the right. We glue all trees $A(\ell,
\bar \ell)$ in vertex $v_{and}$ obtaining a tree $A$.

Finally, we let $e_{and}$ to denote the only free edge of
$K_{and}$ and we glue $A$ to the lower left wall of $K_{and}$ as
on~Figure~\ref{f:conj_g} on the left. Note that, in particular, every literal
gadget is glued to the conjunction gadget.


\begin{figure}[t]
\begin{center}
\includegraphics{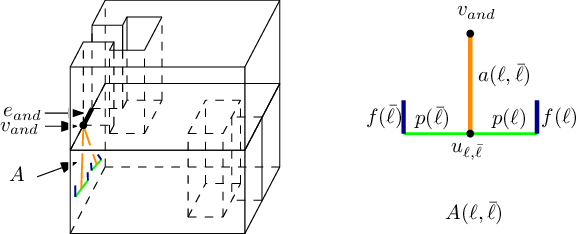}
\caption{Conjunction gadget $K_{and}$ with distinguished edges and paths.}
\label{f:conj_g}
\end{center}
\end{figure}  

After we introduce the remaining gadgets, we will see that $K_{and}$ is glued to
other gadgets only along $A$ and $e_{end}$. 
\ifabstr
We also provide a lemma on collapsing $K_{and}$.
\else
The following lemma states
that if we want to collapse $K_{and}$ at some phase of collapsing, we have to
make $e_{end}$ free first in whole $K(\Phi)$ and only then we can continue with
collapsing $K_{and}$. On the other hand, as soon as we make $e_{and}$ free, we
can collapse the complex to $A$. 
\fi

\begin{lemma}
\label{l:and}
\ifabstr
$K_{and}$ collapses to $A$.
\else
\noindent
\begin{enumerate}
\item
$K_{and}$ collapses to $A$.
\item
Before collapsing any triangle containing one of the edges $f(\ell, \bar \ell)$,
the edge $e_{and}$ has to be collapsed. 
\end{enumerate}
\fi
\end{lemma}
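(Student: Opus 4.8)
The plan is to handle the two items separately, in each case reducing the statement about $K_{and}$ (with the tree $A$ attached) to a statement about Bing's house with one collapsed thick wall and one thin wall, for which the needed collapsing sequence is essentially the classical one. For item~1, first I would observe that $K_{and}$ is obtained from Bing's house $H$ with one thin and one thick wall by collapsing the thick wall down to its distinguished edge using the sequence of Figure~\ref{f:triang_thick} (middle). So it suffices to exhibit a collapsing sequence of $H$ that ends at the subcomplex $A$ (together with the already-collapsed thick wall). The standard collapse of a Bing's house starts by pushing in through the tunnel of one room, collapsing that room onto the shared wall, then pushing in through the tunnel of the other room and collapsing it; what remains is a $1$- or $2$-dimensional spine. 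Here the only free face to start with is $e_{and}$ (the leftover edge of the collapsed wall), and the tree $A$ is glued along $p(\ell)$, $p(\bar\ell)$, $f(\ell)$, $f(\bar\ell)$ into the lower left wall. I would choose the collapse so that the portions of the walls that meet $A$ are collapsed \emph{onto} $A$ rather than away from it — i.e. I want $A$ to be contained in the final spine. Formally this is where Lemma~\ref{l:constrain} enters: taking $M = K_{and}$ and $L$ = Bing's house with one collapsed thick and one thin wall (no tree), with $A \cap L \subseteq \Gamma(M,L)$, a collapse of $L$ to a subcomplex containing $A\cap L$ lifts to a collapse of $M$ to that subcomplex union $(M\setminus L) = {}$(the rest of $A$). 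The precise sequence for $L$ will be spelled out in Section~\ref{s:col_seq}; here I would just assert its existence, mirroring how Lemma~\ref{l:literal} was handled.

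For item~2, I would argue by the standard "you must use a free face" principle together with the structure of where free faces can appear. Initially, inside all of $K(\Phi)$, the only free faces of $K_{and}$ are (a) $e_{and}$ and possibly (b) faces that become free only because of collapses happening in neighboring gadgets glued along $A$. But the tree $A$ is $1$-dimensional and is glued to $K_{and}$ only along $1$-dimensional subcomplexes ($p(\ell)$, $p(\bar\ell)$, $f$'s), and the edges $f(\ell,\bar\ell)$ lie on $A$, not in the interior of any $2$-cell of $K_{and}$; so collapses in other gadgets cannot free any triangle of $K_{and}$ before $e_{and}$ is collapsed. Hence the first face of $K_{and}$ that can collapse is $e_{and}$ itself. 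Now the key topological fact: Bing's house (even with one wall collapsed) has the property that after removing $e_{and}$ and its cofaces, no triangle containing an $f$-edge is yet free — one must first work through the room-tunnel/wall structure, and the combinatorics of the particular triangulation forces $e_{and}$'s "descendants" to be collapsed along a path that reaches the $f$-edges only after traversing interior walls. Concretely I would track the unique maximal face containing each candidate $f$-triangle and show that until $e_{and}$ is removed each such triangle sits in $\geq 2$ maximal faces.

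The main obstacle, as in item~3 of Lemma~\ref{l:literal}, is item~2: the claim "$e_{and}$ must be collapsed before any $f$-triangle" is genuinely a statement about the rigidity of Bing's-house collapses, and making it rigorous requires a careful bookkeeping argument on the particular triangulation (which maximal faces contain which triangles, and in what order free faces can cascade). I expect to reduce it to the analogous property of the classical Bing's house already established by Malgouyres and Franc\'{e}s, citing their Remark~1 and the proof of Theorem~4 in~\cite{malgouyres-frances08}, and supplying only the short additional observation that collapsing the thick wall to an edge and attaching the $1$-dimensional tree $A$ does not create any new free triangle touching an $f$-edge. Item~1 is routine once the explicit sequence for the collapsed Bing's house is fixed in Section~\ref{s:col_seq}, and I would simply invoke Lemma~\ref{l:constrain} to transfer it.
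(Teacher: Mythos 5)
Your proposal follows essentially the same route as the paper: for item~1, an explicit collapsing sequence of $K_{and}$ starting at the free edge $e_{and}$ and kept away from the tree $A$ (the paper spells this out in Section~\ref{s:col_seq}: first the wall below $e_{and}$, then the lowest floor except the edges of $A$, then the walls attached to the lowest floor, and finally the remaining sphere-with-a-hole is collapsed onto $A$), and for item~2 a reduction to \cite[Remark~1]{malgouyres-frances08}, which is exactly what the paper cites. So the structure is right; the deferral of the explicit sequence mirrors the paper's own organization, and your extra observation that the gluing to the rest of $K(\Phi)$ is along the $1$-dimensional subcomplex formed by $A$ and $e_{and}$ is indeed the reason the cited rigidity statement transfers.

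Two inaccuracies are worth fixing, though neither destroys the approach. First, there is no ``standard collapse of a Bing's house through the tunnels'': the classical two-room house has no free face whatsoever, and the only reason $K_{and}$ collapses is the edge $e_{and}$ left over from the collapsed thick wall; the sequence therefore has to be exhibited concretely (as in Figure~\ref{f:conj_coll}) rather than borrowed from a ``standard'' one. Relatedly, the appeal to Lemma~\ref{l:constrain} inside item~1 is unnecessary, since $A$ is simply a subcomplex of $K_{and}$ (drawn on the lower left wall); the constrain lemma is only needed later, when the collapse of $K_{and}$ is performed inside $K(\Phi)$. Second, $e_{and}$ is \emph{not} free in $K(\Phi)$ initially --- it lies in triangles of every clause gadget, and making it free is precisely the satisfiability bottleneck. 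The correct statement, which your surrounding reasoning does support, is that the first face of $K_{and}$ that can ever be collapsed is $e_{and}$: every other non-maximal face of $K_{and}$ has at least two cofaces inside $K_{and}$ itself, collapses outside $K_{and}$ cannot change this, and the triangles of $K_{and}$ (in particular those containing $f(\ell)$ or $f(\bar\ell)$) are maximal in $K(\Phi)$, so they can only disappear as cofaces of free edges of $K_{and}$; hence the bookkeeping should be done on free edges, not on ``maximal faces containing the candidate triangles''.
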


\ifabstr
\else
\begin{proof}
We prove item 1 in Section~\ref{s:col_seq}.
Item 2 of the lemma is explained in~\cite[Remark 1]{malgouyres-frances08}. 
\end{proof}
\fi

\heading{Clause gadget.}
We proceed with introducing the clause gadget. 
For a clause $c
= (\ell_1 \vee \ell_2 \vee \ell_3)$ we set $K(c)$ to be Bing's house with three
collapsed walls as described in Section~\ref{s:bings} and with several
distinguished edges and paths; see Figure~\ref{f:clause_g}. Namely, the only
three free edges of $K(c)$ are labeled
$(\ell_i,c)$. We also distinguish three paths $p(\ell_i,c)$ connecting 
the center of the base floor with $(\ell_i,c)$ (we assume that
$(\ell_i, c)$ is not contained in the path). We also distinguish one other edge
emanating from the center inside the base floor and we label it by
$e_{and}$. This last edge $e_{and}$ is glued together with the edge of the
conjunction gadget labelled $e_{and}$ so that the central vertex of the
base floor becomes the vertex $v_{and}$ of the conjunction gadget.

\begin{figure}[t]
\begin{center}
\includegraphics{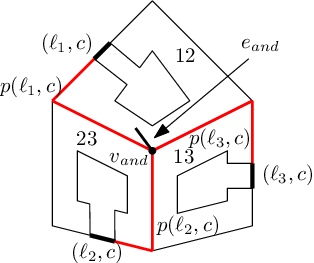}
\caption{The base floor of the clause gadget $K(c)$ with distinguished edges and paths.}
\label{f:clause_g}
\end{center}
\end{figure}  

\begin{lemma}
\label{l:clause}
\ifabstr
$K(c)$ collapses to a complex composed of $e_{and}$, three paths
$p(\ell_i,c)$ and two of the three edges $(\ell_i,c)$.
\else
\noindent
\begin{enumerate}
\item
$K(c)$ collapses to a complex composed of $e_{and}$, three paths
$p(\ell_i,c)$ and two of the three edges $(\ell_i,c)$.
\item
Any collapsing of $K(c)$ starts with one of the edges $(\ell_i,c)$.
\end{enumerate}
\fi
\end{lemma}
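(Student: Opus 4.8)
\textbf{Proof proposal for Lemma~\ref{l:clause}.}

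The plan is to analyze the clause gadget $K(c)$, which is Bing's house with three collapsed walls, using the collapsing sequence for Bing's house with three rooms whose existence is promised in Section~\ref{s:bings}, together with Lemma~\ref{l:constrain} to lift that sequence to the gadget glued into the ambient complex. For item~1, I would first observe that the only free faces of $K(c)$ are the three edges $x_1, x_2, x_3$ (here playing the roles of $(\ell_i,c)$), exactly as noted in Section~\ref{s:bings}. Starting from any one of these, say $(\ell_1,c)$, one can ``uncollapse in reverse'' the standard Bing-house collapse: pushing in through the free edge of the collapsed wall in room $1$ lets one collapse the roof, then the remaining walls and the ground floor of that room, and then the adjacent rooms, exactly as in the classical argument that Bing's house is collapsible. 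The point is that after using up one of the three free edges to initiate the process, the house collapses all the way down to the distinguished $1$-complex consisting of $e_{and}$, the three paths $p(\ell_i,c)$, and the two \emph{unused} free edges $(\ell_2,c)$ and $(\ell_3,c)$. I would make sure the chosen collapsing sequence never collapses any face of this target $1$-complex; since those edges and paths can be taken to lie along ``seams'' of the triangulation, this is arranged by choosing the triangulation and the order of elementary collapses appropriately, and the verification is the kind of routine-but-tedious check deferred to Section~\ref{s:col_seq}.

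For item~2, the key structural fact is that the only free faces of $K(c)$ (before any collapsing takes place) are the three edges $(\ell_i,c)$, because $K(c)$ is Bing's house with three \emph{collapsed} walls and every other nonmaximal face is contained in at least two maximal faces. An elementary collapse of a complex must remove a free face, and the first elementary collapse of $K(c)$ within the ambient complex $K(\Phi)$ must remove a face that is free in $K(\Phi)$; but $K(c)$ is glued to the rest of $K(\Phi)$ only along the tree-like pieces (the path faces and $e_{and}$), which are $1$-dimensional, so gluing cannot destroy the freeness structure among the $2$- and $3$-faces, and in particular cannot create new free faces interior to $K(c)$, nor make the $(\ell_i,c)$ non-free. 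Hence the first face of $K(c)$ to collapse is forced to be one of the $(\ell_i,c)$. (If item~2 is meant at the level of $K(c)$ in isolation, the argument is the same and simpler: the only free faces are the $(\ell_i,c)$.)

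The main obstacle is item~1: producing an explicit collapsing sequence of Bing's house with three rooms and three thick walls that (a) respects the triangulation chosen for the $3$-cells coming from Malgouyres and Franc\'es, (b) leaves precisely the distinguished $1$-complex $e_{and} \cup \bigcup_i p(\ell_i,c) \cup \{(\ell_2,c),(\ell_3,c)\}$ at the end, and (c) uses only the single free edge $(\ell_1,c)$ as its ``starting point'' so that the other two free edges survive. Verifying that such a sequence exists amounts to carefully tracking, room by room, which triangles become free after each collapse — this is exactly the bookkeeping that the paper postpones to Section~\ref{s:col_seq}, and I would do likewise, stating here only that the sequence exists and citing the later section. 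Once item~1 is in hand, Lemma~\ref{l:constrain} (with $M = K(\Phi)$, $L = K(c)$, and $L'$ the target $1$-complex, noting that the constrain complex $\Gamma(K(\Phi),K(c))$ is contained in the target since $K(c)$ meets the rest of $K(\Phi)$ only along $e_{and}$ and the paths) upgrades the statement to the form needed in the global construction.
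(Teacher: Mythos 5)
Your item 2 is fine and matches the paper: the only free faces of $K(c)$ (Bing's house with three collapsed walls) are the three edges $(\ell_i,c)$, so any collapsing must begin with one of them. But your item 1 has a genuine gap: you never actually produce the collapsing sequence. You assert that after initiating at $(\ell_1,c)$ the house ``collapses all the way down'' to the distinguished $1$-complex and explicitly defer the verification to Section~\ref{s:col_seq} --- but the content of Section~\ref{s:col_seq} \emph{is} the proof of this lemma, so deferring to it is circular; the explicit sequence is precisely what a proof of item 1 must supply. Moreover, the sketch you do give is not consistent with the geometry of the gadget. You propose to push in through the free edge of the collapsed wall in room $1$ and collapse room $1$ first (roof, walls, ground floor), then the adjacent rooms. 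In the actual construction the rooms interlock through the base floor: the free edge $(\ell_1,c)$ first opens up walls of a \emph{different} room (the paper collapses the left and bottom walls of room $2$, then its walls perpendicular to the base floor), then one must collapse the interior of the shared base-floor square so that edges next to $(\ell_2,c)$ become free, then room $3$, and only at the end room $1$, taking care that $e_{and}$ survives. The delicate part --- showing that the two unused edges $(\ell_2,c)$, $(\ell_3,c)$, the three paths $p(\ell_i,c)$ and $e_{and}$ can all be protected while each successive room becomes collapsible --- is exactly what your ``routine bookkeeping'' waves away, and it is not routine: the order of rooms is forced, and choosing the wrong order (e.g.\ room $1$ first) does not obviously leave the later rooms with any usable free face away from the protected $1$-complex.

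Two smaller points. First, your use of Lemma~\ref{l:constrain} to lift the collapse to $K(\Phi)$ is not part of this lemma (the paper does the lifting separately, in Step 3 of the global argument), and as you state it the hypothesis fails: $K(c)$ meets the rest of $K(\Phi)$ not only along $e_{and}$ and the paths but also along \emph{all three} edges $(\ell_i,c)$, which are glued to the gadgets $B(\ell_i)$; hence the constrain complex of $(K(\Phi),K(c))$ contains all three of these edges and is \emph{not} contained in your target $L'$, which keeps only two of them. The lifting only works after the relevant $B(\ell_i)$ has been collapsed, so that the cofaces of the chosen edge outside $K(c)$ are already gone. Second, and relatedly, your parenthetical claim in item 2 that the gluing ``cannot make the $(\ell_i,c)$ non-free'' in $K(\Phi)$ is false: these edges are contained in $2$-faces of $B(\ell_i)$ and are non-free in $K(\Phi)$ until those gadgets are collapsed --- this is the mechanism the whole reduction depends on. Since the lemma is stated for $K(c)$ in isolation, your fallback sentence covers item 2, but the ambient-complex reasoning should be removed or corrected.
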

\ifabstr
\else
\begin{proof}
We again postpone the proof of item 1 to Section~\ref{s:col_seq}. Item 2 is
obvious as soon as we realize that the only free faces of $K(c)$ are
the three edges $(\ell_i,c)$.
\end{proof}
\fi

\heading{Disk gadgets.} Finally, for every pair of literals $\ell$, $\bar \ell$
we construct the \emph{disk gadget} $D(\ell, \bar \ell)$ filling empty cycles in the construction of
Malgouyres and Franc\'{e}s~\cite{malgouyres-frances08}. 
As we mentioned before, these gadgets will not be topological disks. However, they are contractible
and play a similar role as disks.


We start with Bing's house with one collapsed thick wall and one thin wall; 
\ifabstr
see Figure~\ref{f:disk_all}, left. 
\else
see Figure~\ref{f:bl}. 
\fi
We
label the only free face of this complex with $e(\ell)$ and glue it to the edge
$e(\ell)$ of $K(\ell, \bar \ell)$. We pick a vertex on the 
edge connecting the left and the bottom wall and label it $v_{and}$. We also glue
this vertex to $v_{and}$ vertex of the conjunction gadget. The edge connecting
$v_{and}$ and one of the vertices of $e(\ell)$ is labelled by $b(\ell)$. Next, for
every clause $c_j$ containing the literal $\ell$ we make a copy of path
$p(\ell,c_j)$ and edge $(\ell,c_j)$ (where the template comes from $K(c_j)$)
starting in vertex $v_{and}$. In particular, $B(\ell)$ is glued to the
complexes $K(c_j)$ along these paths and edges. The resulting complex is denoted by $B(\ell)$. 
We perform an analogous construction for $B(\bar \ell)$.

\ifabstr
\begin{figure}[t]
\begin{center}
\includegraphics{disk_all.eps}
\caption{Complex $B(\ell)$ and gluing disks in the disk gadget.}
\label{f:disk_all}
\end{center}
\end{figure}
\else
\begin{figure}
\begin{center}
\includegraphics{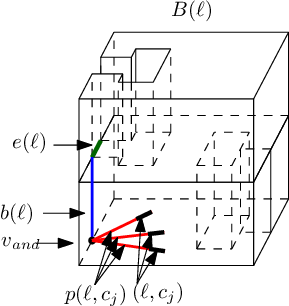}
\caption{Complex $B(\ell)$.}
\label{f:bl}
\end{center}
\end{figure}
\fi

This complex  can be collapsed (inside whole $K(\Phi)$) as soon as the edge
$e(\ell)$ is free. Then it collapses to a complex composed of the distinguished
edges and paths, as the following lemma summarizes.

\begin{lemma}
\label{l:bl}
\ifabstr
$B(\ell)$ collapses to the 1-complex composed of $b(\ell)$, paths $p(\ell,
c_j)$ and edges $(\ell, c_j)$.
\else
\noindent
\begin{enumerate}
\item
$B(\ell)$ collapses to the 1-complex composed of $b(\ell)$, paths $p(\ell,
c_j)$ and edges $(\ell, c_j)$.
\item
Any collapsing of $B(\ell)$ starts with the edge labelled $e(\ell)$.
\end{enumerate}
\fi
\end{lemma}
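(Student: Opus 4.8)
\textbf{Proof proposal for Lemma~\ref{l:bl}.}
The plan is to handle the two items separately, using the structural description of $B(\ell)$ given just before the statement together with Lemma~\ref{l:constrain} to transfer collapsing sequences from pieces to the whole.

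For item~2, I would argue exactly as for item~2 of Lemma~\ref{l:clause}: it suffices to identify the free faces of $B(\ell)$ inside $K(\Phi)$. The complex $B(\ell)$ is built from Bing's house with one collapsed thick wall and one thin wall, whose only free face is the edge $e(\ell)$ (this is the point of using the \emph{collapsed} thick wall, cf. the discussion of Figures~\ref{f:coll_walls} and~\ref{f:conj_g}), together with the tree consisting of $b(\ell)$, the paths $p(\ell,c_j)$ and the edges $(\ell,c_j)$. The edge $e(\ell)$ is glued to $e(\ell)$ of $K(\ell,\bar\ell)$, so one must check that this gluing does not destroy its freeness, i.e. that $e(\ell)$ lies in no $2$-face of $K(\ell,\bar\ell)$; this follows from item~1 of Lemma~\ref{l:literal}, which exhibits $e(\ell)$ as a bare edge in the relevant collapse, and from the way the literal gadget is attached. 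The tree part contributes no free faces other than its leaves, but those leaves are glued to $K(c_j)$ (along $p(\ell,c_j)$ and $(\ell,c_j)$) and to the conjunction gadget (at $v_{and}$) and so are not free in $K(\Phi)$; in particular the edges $(\ell,c_j)$ are among the distinguished free edges of $K(c_j)$ only from the clause-gadget side, and the corresponding leaf in $B(\ell)$ is glued so that no new freeness arises. Hence the only free face of $B(\ell)$ in $K(\Phi)$ is $e(\ell)$, which gives item~2.

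For item~1, the idea is to give an explicit collapsing sequence of $B(\ell)$ by itself and then invoke Lemma~\ref{l:constrain}. Concretely, let $L$ be the sub-complex of $B(\ell)$ equal to Bing's house with one collapsed thick wall and one thin wall (the ``house part''), and let the target $L'$ be the $1$-complex consisting of $b(\ell)$ alone. The house collapses to $b(\ell)$: starting from the free edge $e(\ell)$ one collapses the remaining thin wall and the rest of the Bing's-house floors down to a spanning tree, and among the many possible resulting trees one can choose the collapse so that the surviving edge is exactly $b(\ell)$ (this is the same kind of statement proved for $K_{and}$ in item~1 of Lemma~\ref{l:and}, whose detailed verification is deferred to Section~\ref{s:col_seq}; I would defer this one there as well). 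The constrain complex $\Gamma(B(\ell),L)$ consists of those faces of $L$ that are contained in faces of $B(\ell)\setminus L$, i.e. of $B(\ell)$'s tree part; since that tree is glued to $L$ only at the single vertex $v_{and}$ (a vertex of $b(\ell)$) and along nothing else, $\Gamma$ is just the vertex $v_{and}$, which is contained in $L'=\{b(\ell)\}\ni v_{and}$. Thus Lemma~\ref{l:constrain} applies and yields that $B(\ell)=L\cup(B(\ell)\setminus L)$ collapses to $L'\cup(B(\ell)\setminus L)$, which is precisely the $1$-complex composed of $b(\ell)$, the paths $p(\ell,c_j)$ and the edges $(\ell,c_j)$, as claimed.

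The main obstacle is the bookkeeping for the house-part collapse: one must produce an \emph{explicit} collapsing sequence of Bing's house with one collapsed thick wall and one thin wall that terminates at the prescribed single edge $b(\ell)$ (rather than at some other spanning tree), and check that every intermediate free face used really is free — in particular that the compatible triangulation chosen for the $2$-cells (Figure~\ref{f:triangulating}) admits such a sequence. This is routine but tedious, and it is exactly the sort of detail the paper defers to Section~\ref{s:col_seq}; I would give the sequence there, organized so that one first collapses away the interior of the thin wall, then the two square floors with their holes, then the connecting edges, in each case peeling from the currently-free boundary edge inward, and conclude by noting that at each stage the face being removed meets a unique maximal face.
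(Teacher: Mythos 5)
Your proposal is correct and takes essentially the paper's route: item~2 is exactly the paper's observation that $e(\ell)$ is the only free face of $B(\ell)$ (with your remarks on the gluings handling the tree's leaves), and item~1 amounts to the paper's deferred collapsing sequence --- the paper reruns the procedure of Lemma~\ref{l:and}~(1), collapsing the wall below $e(\ell)$, which is split by $b(\ell)$, in two stages so that $b(\ell)$ survives --- with your explicit use of Lemma~\ref{l:constrain} (constrain complex $\{v_{and}\}\subseteq b(\ell)$) being a harmless repackaging of the fact that the attached tree is left untouched. One aside in your item~2 is false, though immaterial to this lemma: $e(\ell)$ \emph{does} lie in $2$-faces of $K(\ell,\bar\ell)$ (the triangles $T(\ell)$ of Lemma~\ref{l:literal}~(3)) and only becomes a bare edge after the partial collapse of Lemma~\ref{l:literal}~(1); this is irrelevant for the statement about $B(\ell)$ itself, but stating it as a property of $K(\ell,\bar\ell)$ would misrepresent how the gadget blocks $e(\ell)$ until the literal gadget is collapsed.
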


\ifabstr
\else
\begin{proof} 
As usual, item 1 is proved in Section~\ref{s:col_seq}.

Item 2 is true since $e(\ell)$ is the only free edge of $B(\ell)$.
\end{proof}
\fi

Now we can finally construct $D(\ell, \bar \ell)$; 
\ifabstr
see Figure~\ref{f:disk_all}, right.
\else
see Figure~\ref{f:d_glue}.
\fi
We fill two cycles with a disk. The first cycle is formed by $b(\ell)$,
$p(\ell)$ and $a(\ell, \bar \ell)$, the second cycle by $b(\bar \ell)$, $p(\bar
\ell)$ and $a(\ell, \bar \ell)$. This finishes the construction of $D(\ell, \bar
\ell)$ and since we have already described all gluings, it also finishes the
construction of $K(\Phi)$.   

\ifabstr\else
\begin{figure}
\begin{center}
\includegraphics{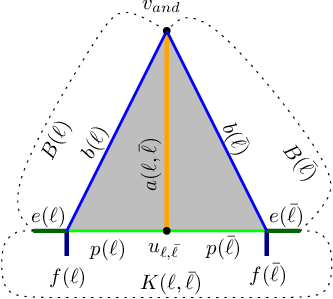}
\caption{Disk gadget $D(\ell,\bar \ell)$ and its attachment to other gadgets.}
\label{f:d_glue}
\end{center}
\end{figure}
\fi

\ifabstr
\heading{Collapsibility of $K(\Phi)$.} At the end of this section we sketch a
proof that $K(\Phi)$ is collapsible if and only if $\Phi$ is satisfiable. This
implies Theorem~\ref{t:main}. Full details are addressed in the full version
(see Appendix).

If $\Phi$ is satisfiable, we consider a satisfying assignment which we extend
to literals. We can start collapsing literal gadgets $K(\ell)$
according to Lemma~\ref{l:literal}~(1) for positive $\ell$ (the roles of $\ell$
and $\bar \ell$ are interchanged, if $\ell$ is negative). This makes the edge 
$b(\ell)$ free and we can collapse $B(\ell)$ due to Lemma~\ref{l:bl}. Then
we are enabled to collapse all clause gadgets according to
Lemma~\ref{l:clause}, because the assignment is satisfying. Then the
conjunction gadget can be collapsed in the way in Lemma~\ref{l:and}. This
enables collapsing the rests of the literal gadgets from
Lemma~\ref{l:literal}~(2) and also $B(\bar \ell)$ for positive $\ell$. Now it
is easy to collapse the remaining complex.

On the other hand, at the initial stage it can be shown that before
collapsing a first triangle of $K_{and}$ only one of the disk gadgets $B(\ell)$
and $B(\bar \ell)$ can (at least partially) start collapsing. Namely only one
of the edges $e(\ell)$ and $e(\bar \ell)$ can be collapsed before collapsing a
first triangle of $K_{and}$. On the other hand, for every clause $c$, there has
to be a literal $\ell(c)$ such that the edge $e(\ell)$ is collapsed before
collapsing a first triangle of $K_{and}$. 
Altogether, this gives a satisfying assignment by setting if $K(\Phi)$ is
collapsible by setting a variable $u$ to be TRUE if $e(u)$ was collapsed before
collapsing a first triangle of $K_{and}$ and FALSE otherwise.

%
%

\else

\section{Collapsibility of $K(\Phi)$}
\label{s:collapse}

In this section we prove that $K(\Phi)$ is collapsible if and only if $\Phi$ is
satisfiable. Thereby we prove Theorem~\ref{t:main}.

\heading{Satisfiable formulas.}
Let us first assume that $\Phi$ is satisfiable and fix one satisfying
assignment of $\Phi$. We construct a collapsing sequence for $K(\Phi)$. We
proceed in several steps (each step will still consist of many elementary
collapses). By $K^{(i)}(\Phi)$ we denote the complex obtained after performing
$i$th step of collapsing. We use similar notation for gadgets, for example,
$K_{and}^{(i)}$ is the remaining part of $K_{and}$ after $i$th step, that is, $K^{(i)}(\Phi) \cap
K_{and}$.

\heading{Step 1.} For every literal $\ell$ we start with ``partial'' collapsing of $K(\ell, \bar
\ell)$ such as in Lemma~\ref{l:literal}~(1). Note that that the constrain
complex of the pair $(K(\Phi), K(\ell, \bar \ell))$ consist of $p(\ell)$, $p(\bar
\ell)$, $e(\ell)$, $e(\bar \ell)$, $f(\ell)$ and $f(\bar \ell)$; therefore Lemma~\ref{l:constrain} induces
collapsing on whole $K(\Phi)$. 
If $\ell$ has positive occurrence in
the assignment, we let $X(\ell)$ collapse to $p(\ell)$, $e(\ell)$ and $f(\ell)$ while
in $X(\bar \ell)$ only the upper thick wall of $X(\bar \ell)$ collapses to a thin wall. This collapsing
makes the edge $e(\ell)$ free. 

We gradually perform this collapsing for all literals with positive occurrence.
Note that by considering positive occurrences only, we do not ``miss'' negative
ones since for every variable $u$ exactly one literal among $u$ and $\neg u$ has positive occurrence.

\heading{Step 2.} We continue with collapsing $B(\ell)$ as stated in Lemma~\ref{l:bl}~(1).
Observe that at this stage, the constrain complex of the pair $(K^{(1)}(\Phi),
B^{(1)}(\ell)) = (K^{(1)}(\Phi), B(\ell))$ contains only 
$b(\ell)$, paths $p(\ell, c_j)$ and edges $(\ell, c_j)$ (in particular the edge
$e(\ell)$ is not in it as well as the vertex of $e(\ell)$ which is not adjacent
to $b(\ell)$), therefore collapsing
from Lemma~\ref{l:bl} induces collapsing of $K^{(1)}(\Phi)$ by
Lemma~\ref{l:constrain}.

In further text we will use Lemma~\ref{l:constrain} many times in a similar fashion
without mentioning it explicitly. (We will describe the constrain complex only.)

\heading{Step 3.}
Now, since the assignment is satisfying for every clause $c$ at least one of
the edges $(\ell_i, c)$ became free. Therefore, every clause gadget collapses
to a $1$-complex described in Lemma~\ref{l:clause}~(1). The constrain complex
for the pair $(K^{(2)}(\Phi), K^{(2)}(c))$ is a subcomplex of the complex
formed by paths $p(\ell_j, c)$ and edges $(\ell_j,c)$ with $j \neq i$.

\heading{Step 4.}
Now we focus on the edge $e_{and}$. At the beginning, it was contained in
triangles in clause gadgets and in a single triangle of the conjunction gadget.
All triangles of clause gadgets were collapsed, therefore $e_{and}$ is free
now. According to Lemma~\ref{l:and}~(1), we can collapse the conjunction
gadget $K_{and}$ to $A$ now (checking that the constrain complex for
$(K^{(3)}(\Phi),K^{(3)}_{and})$ is $A$).

\heading{Step 5.}
In this step, we will collapse the literal and the disk gadgets. The important
fact is that the edges $f(\ell)$ and $f(\bar \ell)$ are already free.
Therefore, we can
proceed with collapsing $K^{(4)}(\ell,\bar \ell)$ according to
Lemma~\ref{l:literal}~(2). This leaves $e(\bar \ell)$ free as well as all
remaining edges of paths $p(\ell)$ and $p(\bar \ell)$. Now we can easily
collapse $B(\bar \ell)$ according to Lemma~\ref{l:bl}~(1) and consequently also
the $D(\ell, \bar \ell)$ (having all boundary edges free). 

\heading{Step 6.}
Now we have a collection of paths emanating from $v_{and}$ (which are remainders of clause gadgets). This collection can be easily
collapsed to a point, say $v_{and}$.

\heading{Non-satisfiable formulas.}
Now we show that $K(\Phi)$ is not collapsible for non-satisfiable formulas.
More precisely, we assume that $K(\Phi)$ is collapsible and we deduce that
$\Phi$ is satisfiable.

If $K(\Phi)$ is collapsible, then in particular some triangle of
$K_{and}$ has to be collapsed. We investigate what
had to be collapsed before collapsing a first triangle of $K_{and}$.
According
to Lemma~\ref{l:and}~(2) the edge $e_{and}$ has to be made free before 
this step. Lemma~\ref{l:clause}~(2) implies that for every clause $c$ there is
a literal $\ell(c)$ in this clause such that the edge $(\ell(c), c)$ was made
free prior this step. This means by Lemma~\ref{l:bl}~(2) that the edge
$e(\ell(c))$ had to be made free previously. Now we recall that no triangle of
$K_{and}$ was collapsed yet (including triangles of $K_{and}$ attached to
$f(\ell)$ and $f(\bar \ell)$). Therefore, Lemma~\ref{l:literal}~(3)
implies that only one of the edges $e(\ell)$ and $e(\bar \ell)$ can be
collapsed at this stage. This gives a satisfying assignment to $\Phi$ by
setting a variable $u$ to be TRUE if $e(u)$ was collapsed (before collapsing
a triangle from $K_{and}$) and FALSE otherwise. The existence of $\ell(c)$ implies that every
clause $c$ is indeed satisfied.
\fi

\ifabstr\else
\section{Collapsing sequences}
\label{s:col_seq}
Here we prove technical lemmas used previously in the text. It is convenient to
change the order of the proofs.

\begin{proof}[Proof of Lemma~\ref{l:and}~(1).]
We recall that our task is to collapse the conjunction gadget from
Figure~\ref{f:conj_g} to the tree $A$. We start with collapsing 
the wall below the
edge $e_{and}$ and then the lowest floor (except edges belonging to $A$). 
We continue with collapsing all walls
that used to be attached to the lowest floor. At this step we have the complex depicted in
Figure~\ref{f:conj_coll}. This complex is already a 2-sphere with a hole and
with $A$ attached to it. It is easy to collapse it to $A$ in the directions of arrows. 
\end{proof}

\begin{figure}
\begin{center}
\includegraphics{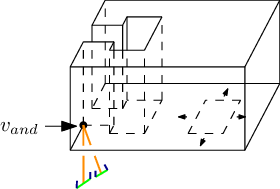}
\caption{An intermediate step in collapsing $K_{and}$.}
\label{f:conj_coll}
\end{center}
\end{figure}

\begin{proof}[Proof of Lemma~\ref{l:clause}~(1).]
We recall that our task is to collapse the clause gadget from
Figure~\ref{f:clause_g}. We will provide a collapsing to the union of paths
$p(\ell_i,c)$ and edges $(\ell_2, c)$ and $(\ell_3, c)$. Other cases are
analogous. For the picture, we will assume that Bing's room number $1$ is
above the base floor and Bing's room number $2$ is below the base floor as in
Figure~\ref{f:clause_coll} on the left.

Since $(\ell_1, c)$ is allowed to be collapsed, we can collapse the left wall
of room $2$ and then the bottom wall. In next step, we can collapse all walls
of room $2$ perpendicular to the base floor. We obtain complex as in
Figure~\ref{f:clause_coll} on the right. Next we collapse the interior of the
$23$ square so that the edges left of
$(\ell_2, c)$ become free. This means that the room $3$ can be collapsed in a
similar fashion as we collapsed room $2$ (note that after this step only
$(\ell_2, c)$, $p(\ell_2, c)$ and part of $p(\ell_1, c)$ remain of the $23$
square). Finally we can collapse room $1$ in a similar fashion taking care that
the edge $e_{and}$ remains uncollapsed.

\end{proof}

\begin{proof}[Proof of Lemma~\ref{l:literal}~(1).]
First we collapse the thick wall of $X(\bar \ell)$ in the way on
Figure~\ref{f:triang_thick}, on the right. This makes the common edge $89$ of
$X(\ell)$ and $X(\bar \ell)$ free. Then the thick wall of $X(\ell)$ can be
collapsed so that the upper Bing's room of $X(\ell)$ becomes Bing's room with
collapsed thick wall. Then the rest of $X(\ell)$ can be collapsed in very same
way as in the proof of Lemma~\ref{l:and}~(1) while keeping $p(\ell)$ and
$f(\ell)$.
\end{proof}

\begin{proof}[Proof of Lemma~\ref{l:literal}~(2).]
As soon as we are allowed to collapse $f(\bar \ell)$, the lower thick wall of
$X(\bar \ell)$ can be collapsed obtaining Bing's room with collapsed thick wall
from the lower Bing's room. Now $X(\bar \ell)$ can be collapsed in analogous
way as was presented in the proof of Lemma~\ref{l:and}~(1) while keeping the
required subcomplex (the role of the lower and upper room are interchanged). 
\end{proof}

\begin{figure}
\begin{center}
\includegraphics{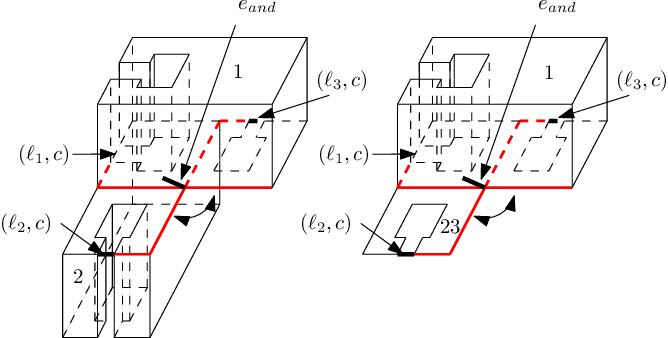}
\caption{Rooms $1$ and $2$ of the clause gadget while collapsing it.}
\label{f:clause_coll}
\end{center}
\end{figure}

\begin{proof}[Proof of Lemma~\ref{l:bl}~(1)] 
We use almost the same collapsing procedure as in the
proof of Lemma~\ref{l:and}~(1). We just remark, that the wall below $e(\ell)$,
split by $b(\ell)$ is collapsed in two stages. First the half containing
$e(\ell)$ is collapsed; then the lowest floor of $B(\ell)$ is collapsed and
finally, the second half of this wall is collapsed.
\end{proof}
\fi

\section{Conclusion}
\label{s:conclusion}

We have shown that it is NP-hard to decide whether a $3$-dimensional complex
collapses to a point. Here we mention few (simple) corollaries of our
construction as well as several related questions.

\heading{Collapsing $d$-complexes to $k$-complexes.} Motivated by a question of
Malgouyres and Franc\'{e}s~\cite{malgouyres-frances08} about higher dimensions, we set up question
$(d,k)$-\textsc{Collapsibility} asking whether a given $d$-dimensional complex collapses
to some $k$-dimensional complex where $d > k \geq 0$ are fixed parameters.

Our result shows that $(3,0)$-\textsc{Collapsibility} is NP-complete; however, it is not
difficult to observe that our result can be extended to showing that
$(d,k)$-\textsc{Collapsibility} is NP-complete for any $d \geq 3$ and $k \in \{0,1\}$.
For this it is sufficient to attach a $d$-simplex to $v_{and}$, say, and remark that
if $\Phi$ is not satisfiable, then any collapsing of $K(\Phi)$ yields a complex
of dimension $2$ or more. (We also remark that the case $d \geq 3$, $k = 1$ can
be already obtained from the construction of Malgouyres and Franc\'{e}s.)

As we mentioned in the introduction, it is not hard to see that
$(d,k)$-\textsc{Collapsibility} is polynomial time solvable whenever $d \leq 2$, and
also in the codimension 1 case (see Proposition~\ref{p:codim}).

In the remaining cases, $d \geq k + 2 \geq 2$, it is reasonable to believe that
an $NP$-hardness reduction can be obtained with higher dimensional analogues of
the gadgets in our construction. However, it does not follow from our
construction immediately, therefore we pose this case as a question.

\begin{question}
  What is the complexity status of $(d,k)$-\textsc{Collapsibility} for $d \geq k + 2 \geq
2$?
\end{question}

\heading{Collapsing to a fixed $1$-complex.}
In fact, our construction also shows that it is NP-complete 
to decide whether a $3$-dimensional complex collapses to
a fixed $1$-complex. 
For this, it is sufficient to attach the fixed $1$-complex to $v_{and}$ (and
eventually a $d$-simplex for $d \geq 3$ again if we want to reach higher
dimension).

\heading{Collapsing of complexes from a specific class.}
In general we can consider two collections of simplicial complexes, the initial
collection $\I$ and the target collection $\T$. The $(\I,
\T)$-\textsc{Collapsibility} question asks whether the given input complex from $\I$
collapses to some complex from $\T$. It would be interesting to know whether
this question is polynomial time solvable for some natural choices of $\T$ and
$\I$. One natural choice, in the author's opinion, is when $\I$ is a collection
of triangulated $d$-balls for some $d \geq 3$ and $\T$ is simply a point. Note that
even in this setting the question is non-trivial since there exist
non-collapsible $d$-balls; see, e.g.,~\cite[Corollary 4.25]{benedetti12}.
However, even in this case we suspect NP-hardness. 


\ifabstr
\heading{Acknowledgements.}
\else
\section*{Acknowledgements}
\fi
I would like to thank Dominique Attali, Bruno Benedetti, Anders Bj\"{o}rner,
Alexander Engstr\"{o}m, Andr\'{e} Lieutier and R\'{e}my Malgouyres
for helpful answers to my questions and/or providing useful references. I also
thank the anonymous referee for a list of very helpful comments.

\bibliographystyle{alpha}
\bibliography{../bib/general}

\ifabstr 
\section*{Appendix}
What follows is a full version.


\else 

\appendix
\section{Unrecognizability of contractible complexes}
\label{a:contractibility}

Here we briefly discuss the current state of art regarding the recognition of
contractible complexes. We first focus on the case of complexes of dimension at
least $5$. 

\begin{theorem}[Novikov]
\label{t:contractible}
For every $d \geq 5$, it is algorithmically undecidable whether a given simplicial
complex of dimension at most $d$ is contractible.
\end{theorem}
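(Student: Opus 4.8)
The plan is to reduce from the (known) algorithmic undecidability of the triviality problem for finitely presented groups: there is no algorithm that, given a finite presentation $\langle g_1,\dots,g_n \mid r_1,\dots,r_m\rangle$, decides whether the presented group is trivial (Adian--Rabin). First I would recall that from such a presentation one builds, in an effective way, a $2$-dimensional simplicial complex $P$ (a presentation complex, suitably subdivided) with $\pi_1(P)$ isomorphic to the given group. A complex is contractible iff it is simply connected and has trivial reduced homology in all dimensions; for a presentation complex the homology condition can be checked algorithmically (Smith normal form), so the only obstruction to deciding contractibility of $P$ is deciding whether $\pi_1(P)$ is trivial — which is exactly the undecidable problem. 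However, $P$ need not be simply connected even when its homology vanishes (e.g.\ acyclic but non-trivial groups), so I cannot reduce directly to $P$; I need to force the dimension up and kill homology artificially.

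The standard device is to take the presentation complex of a group $G$ that is perfect and superperfect would be wrong in general, so instead I would use the following: start from an arbitrary finite presentation, form the presentation $2$-complex $P$, and then attach cells of dimension $3$, $4$, $5$ to make every homology group of the resulting complex $Q$ vanish while keeping $\pi_1(Q)\cong\pi_1(P)=:G$. Concretely, attaching $3$-cells along generators of $H_2(P)$ kills $H_2$ (at the cost of possibly creating $H_3$), then $4$-cells kill $H_3$, then $5$-cells kill $H_4$; since $P$ is $2$-dimensional the process terminates at dimension $5$, all attachments are along $\pi_1$-trivial subcomplexes so $\pi_1$ is unchanged, and by the Hurewicz theorem the resulting $5$-complex $Q$ is contractible if and only if it is simply connected, i.e.\ if and only if $G$ is trivial. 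All of this is effective: given the presentation one can write down $Q$ explicitly as a simplicial complex (triangulating the attached cells), so a contractibility-decider for $5$-complexes would decide triviality of $G$, a contradiction.

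The main obstacle — and the point that requires the most care — is making the cell attachments genuinely algorithmic and keeping track of how the attaching maps affect $\pi_1$. Attaching a $k$-cell $(k\ge 3)$ along a map $S^{k-1}\to Q^{(k-1)}$ does not change $\pi_1$ for any attaching map, so the $\pi_1$-preservation is automatic once $k\ge 3$; the subtlety is rather that one must choose the attaching maps so that they actually kill the intended homology classes, and one must realize the whole construction simplicially with an explicit triangulation whose size is a computable function of the input. A clean way to handle the homology bookkeeping is to note that $H_*(P)$ is finitely generated and computable, pick explicit simplicial cycles representing generators of $H_2(P)$, cone them off (each such cycle bounds a $3$-ball, triangulated), recompute homology, and iterate; since each stage only adds cells one dimension higher and $P$ has dimension $2$, after three stages we are in dimension $5$ with all reduced homology zero. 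Finally I would remark that if one only wants undecidability of \emph{contractibility} (rather than a sharp dimension bound) one can be even more wasteful, but the above shows dimension $5$ suffices, matching the statement of Theorem~\ref{t:contractible}.
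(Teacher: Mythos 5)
Your reduction from Adian--Rabin is a genuinely different route from the paper (which quotes Novikov's theorem to get an effective sequence of homology $5$-spheres $\Sigma_i$ that are PL spheres or have nontrivial $\pi_1$, undecidably, and then just removes one facet), but the central step of your construction --- killing the homology of the presentation complex $P$ by attaching $3$-, $4$- and $5$-cells while keeping $\pi_1\cong G$ --- has a genuine gap, and it is exactly the point that the homology-sphere machinery exists to handle. A $3$-cell attached along a map $S^2\to P$ can only kill a \emph{spherical} class of $H_2(P)$, i.e.\ a class in the image of the Hurewicz map; by Hopf's exact sequence $\pi_2(P)\to H_2(P)\to H_2(G)\to 0$, these classes generate $H_2(P)$ only when $H_2(G)=0$ (and similarly no cell of dimension $\geq 3$ can ever kill $H_1(P)=G^{\mathrm{ab}}$). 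So for an arbitrary finite presentation there is no acyclic complex of any dimension with fundamental group $G$ unless $G$ is superperfect, and the promise ``attach cells of dimension $3,4,5$ to make every homology group vanish while keeping $\pi_1(Q)\cong G$'' is simply unachievable in general. Your fallback of ``picking simplicial cycles generating $H_2(P)$ and coning them off'' is not a cell attachment along a sphere: coning off a subcomplex $Z$ kills the image of $\pi_1(Z)$ in $\pi_1(P)$ by van Kampen, so it destroys the $\pi_1$-control your ``only if'' direction needs. Concretely, take $G=\mathbb{Z}^2=\langle a,b\mid [a,b]\rangle$: $P$ is a torus, the generator of $H_2(P)$ is supported on all of $P$, and coning it off produces the cone over the torus, which is contractible although $G$ is nontrivial --- so the map ``presentation $\mapsto Q$'' you describe is not a correct reduction.

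There is a second, quieter gap of the same nature: even when the spherical classes do generate (e.g.\ when $G$ is superperfect, or when $G$ is trivial), your algorithm must \emph{find} explicit attaching maps $S^2\to P$ realizing given homology classes, and this is not a routine Smith-normal-form computation; deciding/representing sphericity involves $\pi_2$ as a $\mathbb{Z}G$-module and is not effective for arbitrary presentations, so a search for such maps terminates only on the favourable inputs and does not yield a total reduction. The standard repair is to reduce not from arbitrary presentations but from an Adian--Rabin-type family of finitely presented \emph{superperfect} groups with undecidable triviality, and to build (effectively) acyclic complexes or, better, homology $d$-spheres with these fundamental groups (Kervaire's construction, $d\geq 5$); a homology sphere is PL standard iff the group is trivial, and removing a facet then gives your complex $Q$. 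That is precisely what Theorem~\ref{t:novikov} packages, which is why the paper's proof goes through Novikov's theorem and a single deleted simplex rather than through an ad hoc homology-killing procedure.
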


The proof of Theorem~\ref{t:contractible} easily follows from~\cite[\S 10]{volodin-kuznetsov-fomenko74}. 
We sketch a proof here. 

\begin{proof}[Sketch of a proof of Theorem~\ref{t:contractible}.]
Novikov~\cite[\S 10]{volodin-kuznetsov-fomenko74} shows
the existence of efficiently constructible sequence $M_j$ of $d$-manifolds such
that $M_j$ is a ball if and only if $\pi(M_j)$ is trivial and it is
algorithmically undecidable whether $\pi(M_j)$ is trivial. In particular, $M_j$
is contractible if and only if $\pi(M_j)$ is trivial. In order to finish the
proof, we need to know that $M_j$ can be efficiently constructed as a
simplicial complex. This can be indeed done by inspecting the
proof in~\cite{volodin-kuznetsov-fomenko74} with not too much effort.
\end{proof}

An alternative proof can be obtained from a more
complete exposition by Nabutovsky; see the appendix of~\cite{nabutovsky95}.
This is done in an earlier version of this
paper~\cite{tancer12collapsibilityarxiv_v1}. The proof there is in full detail.
It is easier to get a triangulation, because the analogues of $M_j$ are zero sets
of some polynomials with rational coefficients. On the other hand, the overall
proof is slightly more complicated since it is necessary to transform spheres into balls.

The dimension 5 in Theorem~\ref{t:contractible} can be dropped to 4, if we
greedily collapse $5$-dimensional simplices of $\Theta_i$ via some of their
$4$-dimensional faces. Note that we cannot get stuck on a $5$-dimensional
complex, since $\Theta_i$ is connected. (The author learnt this idea, in a
different context, from Bruno Benedetti.)

On the other hand, the contractibility question for complexes of dimension at most $1$ is
trivially polynomial-time solvable since it is equivalent with recognition of
trees (as graphs).  

Regarding complexes of dimension at most $2$ or $3$, the decidability of the
contractibility question is open in these two cases to the best knowledge of the author. In
particular, in dimension $2$, this question is equivalent to the triviality of
finite \emph{balanced} representations of groups; see the exercise above Subsection I.1.4 in~\cite{hog-angeloni-metzler-sieradski93}.

We also remark that it is well-known that the triviality of the fundamental group is algorithmically undecidable already for complexes of dimension 2; see,
e.g.,~\cite{haken73}. This is equivalent with contractibility of each loop in
the complex. However, this question should not be confused with the
contractibility of the complex. On the level of group presentations, the
triviality of the fundamental group corresponds to the triviality of any finite
presentations of groups, not necessarily balanced.

\fi

\end{document}